\newcommand{\set}[2]{\{#1 : #2 \}}
\newcommand{\re}{\text{\rm Re\,}}
\newcommand{\im}{\text{\rm Im\,}}
\newcommand{\bn}{{\mathbb{N}}}
\newcommand{\br}{{\mathbb{R}}}
\newcommand{\bc}{{\mathbb{C}}}
\newcommand{\ca}{{\mathcal{A}}}
\newcommand{\css}{{\mathcal{S}}}
\newcommand{\csss}{{\mathscr{S}}}
\newcommand{\cm}{{\mathcal{M}}}
\renewcommand{\b}{\beta}
\renewcommand{\l}{\lambda}
\newcommand{\s}{\sigma}
\renewcommand{\r}{\rho}
\newcommand{\p}{\varphi}
\newcommand{\dd}{\Delta}
\renewcommand{\o}{\omega}
\newcommand{\oo}{\Omega}
\newcommand{\g}{\gamma}
\newcommand{\gga}{\Gamma}
\newcommand{\ep}{\varepsilon}
\newcommand{\z}{\zeta}
\newcommand{\pp}{\Phi}
\renewcommand{\sl}{{S_\lambda}}
\newcommand{\mat}{{\mathrm{Mat}_{2,2}(\bc)}}
\newcommand{\nt}{\noindent}
\newcommand{\bsl}{\backslash}
\newcommand{\prt}{\partial}
\newcommand{\ti}{\tilde}
\newcommand{\lp}{\left(}
\newcommand{\rp}{\right)}
\newcommand{\ie}{\emph{i.e.,\ }}
\newcommand{\bg}{{D_{bg}}}
\newcommand{\bgm}{{D_{bg, m}}}
\newcommand{\bgo}{{D_{bg, 0}}}
\newcommand{\R}{\mathbb{R}}
\newcommand{\C}{\mathbb{C}}
\numberwithin{equation}{section}
\newtheorem{theorem}{Theorem}[section]
\newtheorem{lemma}[theorem]{Lemma}
\newtheorem{corollary}[theorem]{Corollary}
\newtheorem{proposition}[theorem]{Proposition}
\newtheorem{remark}[theorem]{Remark}
\begin{document}

\title[Lieb-Thirring inequalities]
{Lieb-Thirring inequalities for an effective Hamiltonian of bilayer graphene}

\author{Ph. Briet}
\address{Center of Theoretical Physics, CNRS, Aix-Marseille University and University of Toulon, Luminy Campus, 163 ave. de Luminy, 13288 Marseille Cedex 9, France}
\email{briet@cpt.univ-mrs.fr}

\author{J.-C. Cuenin}
\address{Institute of Mathematics, University of Munich, Theresien str. 39, D-80333 Munich, Germany}
\email{cuenin@math.lmu.edu}

\address{Department of Mathematical Sciences, Loughborough University, Loughborough, Leicestershire, LE11 3TU United Kingdom}
\email{J.Cuenin@lboro.ac.uk}

\author{L. Golinskii}
\address{B. Verkin Institute for Low Temperature Physics and Engineering of the
National Academy of Sciences of Ukraine, 47 Science ave.,  61103 Kharkiv, Ukraine}
\email{golinskii@ilt.kharkov.ua}

\author{S. Kupin}
\address{IMB, CNRS, Universit\'e de Bordeaux, 351 ave. de la Lib\'eration, 33405 Talence Cedex, France}
\email{skupin@math.u-bordeaux1.fr}

\subjclass[2010]{Primary: 35P15; Secondary: 30C35, 47A75, 47B10.}

\keywords{effective Hamiltonian of bilayer graphene, complex (non-selfadjoint) perturbation, discrete spectrum, Lieb-Thirring inequalities, Schatten--von Neumann classes}

\thanks{The research is partially supported by ANR-18-CE40-0035 grant.}

\begin{abstract}
Combining the methods of Cuenin \cite{jcc2} and Borichev-Golinskii-Kupin \cite{bgk1}, \cite{bgk2}, we obtain the so-called Lieb-Thirring inequalities for non-selfadjoint perturbations of an effective Hamil\-to\-nian for  bilayer graphene.
\end{abstract}

\maketitle

\section*{Introduction and main results}\label{s0}
Since the early 2000-s, a certain amount of attention of the mathematical community has been attracted by the spectral properties of complex (non-selfadjoint) perturbations of model operators from mathematical physics. Among relatively recent papers in this direction,  we quote articles by Demuth-Hansmann-Katriel \cite{dhk}, Frank \cite{rf01}, \cite{rf03}, Frank-Simon \cite{rf02}, Frank-Sabin \cite{fsa1}, Frank-Laptev-Safronov \cite{fls}, Fanelli-Krejčiřík-Vega \cite{fkv1,fkv2}, Mizutani \cite{miz}, Fanelli-Krejčiřík \cite{fk}, Cuenin-Kenig \cite{jcc-ck} and Lee-Seo \cite{lee-seo}, dealing with spectral properties of complex Schrödinger operators. Similar problems for Dirac, fractional Schrödinger and other types of operators were treated in Cuenin-Laptev-Tretter \cite{clat}, Cuenin-Seigl \cite{jcc-ps}, Dubuisson \cite{cd1}, Cuenin \cite{jcc1,jcc3}, Cossetti \cite{cos}, Ibrogimov-Krejčiřík-Laptev \cite{ikl} and Hulko \cite{Hul1,Hul2}. A series of results on spectral analysis of Jacobi matrices can be found in Borichev-Golinskii-Kupin \cite{bgk1,bgk2} and Golinskii-Kupin \cite{gku1}-\cite{gku4}.

In the present article, we are interested in the study of perturbations of bilayer graphene Hamiltonian given by
\begin{equation}\label{e001}
\bgm:=
\begin{bmatrix}
m& 4\prt^2_z\\
4\prt^2_{\bar z}& -m
\end{bmatrix},
\end{equation}
where $m\ge 0$ and
$$
\prt_{z}:=\frac12\lp\prt_{x_1}+i\prt_{x_2}\rp, \quad
\prt_{\bar z}:=\frac12\lp\prt_{x_1}-i\prt_{x_2}\rp.
$$
As usual, we let
$$
L^2(\br^2;\bc^2):=\left\{
f=
\begin{bmatrix}
f_1\\ f_2
\end{bmatrix}:
\|f\|^2_2=\int_{\br^2}|f(x)|^2\, dx<\infty \right\}
$$
to be the standard space of measurable vector-valued functions; here 
$$|f(x)|=(|f_1(x)|^2+|f_2(x)|^2)^{1/2}.
$$
Furthermore, let
$$
H^2(\br^2;\bc^2):=\left\{
f\in L^2(\br^2;\bc^2): \|f\|^2_{H^2}=\int_{\br^2}(1+|\xi|^2)^2|\hat f(\xi)|^2\, d\xi<\infty
\right\}
$$
be the corresponding second order Sobolev space, where $\hat f$ denotes the Fourier transform of a function $f$, see Section \ref{s11} for more notation. It is not difficult to see that
$$
\bgm: H^2(\br^2;\bc^2)\to L^2(\br^2;\bc^2)
$$
is a selfadjoint operator. Since
$$
\bgm^2=(\dd^2+m^2)I_2,
$$
the spectral mapping theorem yields $\s(\bgm):=(-\infty, -m]\cup[m,+\infty)$. The resolvent  set of $\bgm$ is denoted by $\r(\bgm):=\bc\bsl\s(\bgm)$.

Detailed discussion of this and other similar operators from the physical point of view can be found in the book of Katznelson \cite{ka}.

We consider the perturbed operator
\begin{equation}\label{e002}
\bg:=\bgm+V
\end{equation}
with $V\in L^q(\br^2;\mat),\ q\ge 1$. Since the perturbation $V$ is not assumed to be selfadjoint, the operator $\bg$ may be non-selfadjoint as well. For the formal definition of $\bgm+V$ for the class of potentials considered here we allude to the ``factorization method" of Kato \cite{ka1}; see also Gesztesy-Latushkin et al. \cite{glmz}. A version of Weyl's theorem \cite[Theorem 4.5]{glmz} asserts that
\begin{equation}\label{e003}
\s_{ess}(\bg)=\s_{ess}(\bgm)=(-\infty, -m]\cup[m,+\infty),
\end{equation}
where we adopt the convention that $\s_{ess}(\bg):=\s(\bg)\bsl\s_d(\bg)$ and the discrete spectrum $\s_d(D)$ is the set of isolated eigenvalues of $D$ of finite multiplicity.

We shall be interested in distribution properties of the discrete spectrum $\s_d(\bg)$ of the perturbed operator $\bg$. Note that $\s_d(\bg)$ can only accumulate to $\s_{ess}(\bg)$, and we want to find some quantitative characteristics of the rate of accumulation. 

The first step in this direction is to understand better the localization of the discrete spectrum $\s_d(\bg)$. The well-established Birman-Schwinger operator
\begin{equation}\label{e004}
BS_z:=|V|^{1/2}(\bgm-z)^{-1}V^{1/2}, \quad z\in\r(\bgm),
\end{equation}
plays a key role in this problem, see original references by Birman \cite{bi1}, Schwinger \cite{sch1}.
Here, $V(x)=|V(x)|\, U(x)$ is the polar decomposition of the matrix $V(x)$, 
$|V(x)|:=(V(x)^*V(x))^{1/2}$ and $U(x)$ is the corresponding partial isometry. So, $V^{1/2}(x):=|V(x)|^{1/2} U(x)$ for a.~e. $x\in\br^2$. The Birman-Schwinger principle \cite[Theorem 3.2]{glmz} says that $z\in \r(\bgm)$ is an eigenvalue of $\bg$ iff $-1$ is an eigenvalue of the operator $BS_z$. In particular, we have the inclusion
$$
\s_d(\bg)\subset \{z\in\rho(\bgm): \|BS_z\|\ge 1\}.
$$
Laptev-Ferrulli-Safronov \cite[Thm. 1.1]{lfs} obtain the following interesting result.
%
\begin{theorem}[{\cite{lfs}}]\label{t01}
Let $\bgm, \bg$ be as above and $V\in L^q(\br^2;\mat)$, $1<q<4/3$. Then
\begin{enumerate}
\item For $z\in\r(\bgm)$,
\begin{equation}\label{e005}
\|BS_z\|^q=\||V|^{1/2}(\bgm-z)^{-1}V^{1/2}\|^q\le C_q\|V\|^q_q\frac{(|z-m|+|z+m|)^q}{|z^2-m^2|^{q-1/2}}.
\end{equation}
\item In particular,
$$
\s_d(\bg)\subset \left\{z: C_q\|V\|^q_q\frac{(|z-m|+|z+m|)^q}{|z^2-m^2|^{q-1/2}}\ge 1\right\}.
$$
\end{enumerate}
\end{theorem}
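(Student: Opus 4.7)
My plan is to reduce the matrix-valued biharmonic resolvent problem to scalar Schrödinger-resolvent estimates, and then invoke the two-dimensional Kenig--Ruiz--Sogge (KRS) uniform Sobolev estimate.

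The first step exploits the algebraic identity $\bgm^2 = (\Delta^2 + m^2) I_2$, which yields the factorization $(\bgm - z)^{-1} = (\Delta^2 - \mu^2)^{-1}(\bgm + z)$ with $\mu^2 := z^2 - m^2$, together with the partial-fraction decomposition
\begin{equation*}
(\Delta^2 - \mu^2)^{-1} = \frac{1}{2\mu}\bigl[R(\mu) - R(-\mu)\bigr], \qquad R(\mu) := (-\Delta - \mu)^{-1}.
\end{equation*}
The cofactor formula then identifies the diagonal entries of $(\bgm - z)^{-1}$ as $(m \pm z)(\Delta^2 - \mu^2)^{-1}$. For the off-diagonal entries I would use the elementary identity
\begin{equation*}
\frac{\zeta^2}{|\xi|^2 \mp \mu} = \Bigl(\frac{\zeta}{|\xi|}\Bigr)^{\!2}\Bigl(1 \pm \frac{\mu}{|\xi|^2 \mp \mu}\Bigr), \qquad \zeta := \xi_1 + i\xi_2,
\end{equation*}
to rewrite them as $-\tfrac12 S^2 [R(\mu) + R(-\mu)]$ (and its conjugate), where $S$ denotes the zero-order Fourier multiplier with symbol $\zeta/|\xi|$, a linear combination of Riesz transforms and hence bounded on $L^p(\br^2)$ for every $p \in (1, \infty)$.

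The second step is to estimate $\|BS_z\|_{L^2 \to L^2}$ entrywise. The multiplication matrices $|V|^{1/2}, V^{1/2}$ have scalar entries pointwise dominated by $\|V\|_{\mathrm{op}}^{1/2}$, so each of the (at most four) terms in each entry of the product $|V|^{1/2}(\bgm - z)^{-1} V^{1/2}$ is a scalar expression $\|f\,T\,g\|_{L^2(\br^2) \to L^2(\br^2)}$ with $T \in \{R(\pm\mu),\,S^2 R(\pm\mu)\}$ and scalar multipliers $f, g$ satisfying $\|f\|_{L^{2q}}, \|g\|_{L^{2q}} \leq \|V\|_{L^q}^{1/2}$. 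Hölder's inequality (with $S^2$ absorbed into the intermediate space by its $L^p$-boundedness) reduces the problem to bounding
\begin{equation*}
\|R(\pm\mu)\|_{L^a \to L^c}, \qquad a = \frac{2q}{q+1}, \quad c = \frac{2q}{q-1}, \quad \frac{1}{a} - \frac{1}{c} = \frac{1}{q}.
\end{equation*}

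The third step invokes the two-dimensional KRS uniform Sobolev estimate $\|R(\pm\mu)\|_{L^a \to L^c} \leq C_q |\mu|^{1/q - 1}$, valid for $1/q \in [2/3, 1)$ and in particular covering $q \in (1, 4/3)$. Assembling the bounds, the diagonal contributions are controlled by $|m \pm z|(2|\mu|)^{-1}\|V\|_{L^q}|\mu|^{1/q-1}$ and the off-diagonal ones by $C\|V\|_{L^q}|\mu|^{1/q-1}$. Since $|\mu| = (|z-m||z+m|)^{1/2} \leq \tfrac12(|z-m|+|z+m|)$ by AM--GM, the off-diagonal terms are dominated by the diagonal ones, giving $\|BS_z\| \leq C_q \|V\|_{L^q}(|z-m|+|z+m|)|\mu|^{1/q-2}$. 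Raising to the $q$-th power and using $|\mu|^{2q-1} = |z^2 - m^2|^{q-1/2}$ yields Part~(1); Part~(2) then follows immediately from the Birman--Schwinger principle recorded in the introduction. The main subtlety is the off-diagonal reduction: without converting the second-order operators $4\partial_z^2, 4\partial_{\bar z}^2$ into bounded Calderón--Zygmund multipliers times sums of scalar resolvents, one cannot directly leverage the KRS $|\mu|$-decay, since $\partial^2 R(\pm\mu)$ on its own does not enjoy it.
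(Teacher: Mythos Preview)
The paper does not supply its own proof of this statement: Theorem~\ref{t01} is quoted verbatim from Ferrulli--Laptev--Safronov \cite{lfs} and only serves as motivation for the paper's own (stronger) Theorem~\ref{t1}. Your argument is essentially correct and is, in fact, the natural route one would expect \cite{lfs} to take. The algebraic reduction via $\bgm^2=(\Delta^2+m^2)I_2$, the partial-fraction splitting of $(\Delta^2-\mu^2)^{-1}$ into Schr\"odinger resolvents, the Riesz-transform trick for the off-diagonal entries, and the final H\"older step are all sound; the AM--GM comparison showing the off-diagonal contribution is dominated by the diagonal one is exactly what collapses everything to the stated bound. The only point worth flagging is the reference: the original Kenig--Ruiz--Sogge paper treats $n\ge 3$, so for the two-dimensional estimate $\|(-\Delta-\mu)^{-1}\|_{L^a\to L^{a'}}\le C|\mu|^{1/q-1}$ on the dual line with $q\in(1,3/2]$ you should point to an appropriate extension (e.g.\ Frank \cite{rf01} or the discussion around \cite[(44)]{fsa1}).

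It is still instructive to contrast your approach with the paper's proof of the stronger Theorem~\ref{t1}. There the authors stay with the biharmonic resolvent $(\Delta^2-k^4)^{-1}$ itself: after scaling to $|k^4|=1$ they localise with a cutoff $\chi$ near the unit circle, run Stein complex interpolation on the analytic family $\chi(D)(\Delta^2-z)^{-\zeta}$ (kernel bounds coming from stationary phase, Lemma~\ref{lemma FT3}), and handle the region away from the singularity by Kato--Seiler--Simon. This buys them Schatten-$\css_p$ control rather than just the operator-norm bound in \eqref{e005}, and it extends the range to $1\le q<\infty$. The paper itself acknowledges, in the remark following Proposition~\ref{p1}, that your partial-fraction identity gives an alternative derivation of the key pointwise kernel estimate; so the two routes are known to be closely related, with the interpolation machinery being what upgrades the operator-norm statement to the trace-ideal one.
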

Slightly later, the second author \cite[Thm. 1.1, Prop. 2.4]{jcc2} improved the resolvent bound in several respects. First, he showed that the norm of the Birman-Schwinger operator $BS_z$ in the LHS of \eqref{e005} can be taken in an appropriate Schatten-von Neumann class $\css_p,\ p=p(q)$; second, the range of parameter $q$ is extended to $1\le q\le 3/2$. It was observed that these results were optimal in a certain sense. We mention also that \cite[Prop. A.5]{jcc2} addresses more general situations as compared to \cite[Thm. 1.1]{lfs}; in particular, the former is valid for more general differential operators than the bilayer graphene Hamiltonian.

The key to the Lieb-Thirring type inequalities obtained in this article is a claim similar to \cite[Prop. 2.4]{jcc2}. We feel that it is appropriate to give a detailed and a self-contained proof of this result, see Theorem \ref{t1} below. As compared to \cite[Prop. 2.4]{jcc2}, we extend the range of parameter $q$ to $1\le q<\infty$.

\begin{theorem}\label{t1}
Let $\bgm, \bg$ be defined in \eqref{e001}, \eqref{e002}, and $m>0$. For $q\ge 1$ and $\ep>0$, set
\begin{equation}\label{bschw01}
p=p(q,\ep):=\left\{
\begin{array}{ll}
\frac{q}{2-q}+\ep,& 1\le q<4/3,\\
\frac{q}{2-q},& 4/3\le q\le 3/2,\\
2q,& q>3/2.
\end{array}
\right.
\end{equation}
\begin{enumerate}
\item[(I)] Let $1\le q\le 3/2$. There is a \ $C_3>0$ such that for any $A,B\in L^{2q}(\br^2; \mat)$,  one has
\begin{equation}\label{bshw1}
\|A(\bgm-z)^{-1}B\|_{\css_{p}}\le C_3\Phi(z)\|A\|_{2q}\|B\|_{2q}, 
\end{equation}
where 
$$
\Phi(z)=\Phi_q(z):=\frac{|z+m|+|z-m|}{|z^2-m^2|^{q_1}}\,,
$$
$z\in\r(\bgm)$ and $q_1:=1-1/(2q)$. 
\item[(II)] Let $q>3/2$. There is a \ $C_4>0$ such that for any $A,B\in L^{2q}(\br^2;\mat)$, one has
\begin{equation}\label{bshw2}
\|A(\bgm-z)^{-1}B\|_{\css_{p}} \le C_4\Psi(z)\,\|A\|_{2q}\|B\|_{2q}, 
\end{equation}
where
$$
\Psi(z)=\Psi_q(z):=\frac{(|z+m|+|z-m|)^{q_2}}{|z^2-m^2|^{1/q}}\; \frac 1{d^{1-q_2}(z,\s(\bgm))}\,, $$
$z\in\r(\bgm)$ and $q_2:=3/(2q)<1$. Here, $d(z,\s(\bgm))$ is the distance from $z$ to $\s(\bgm)$.
The constants $C_3, C_4$ depend on $m, q,\ep$, but not on $A,B\in L^{2q}(\br^2; \mat)$.
\end{enumerate}
\end{theorem}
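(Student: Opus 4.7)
The strategy is to exploit the algebraic identity $\bgm^2=(\Delta^2+m^2)I_2$, which reduces the matrix-valued resolvent of $\bgm$ to scalar Laplacian resolvents on $\br^2$. Setting $w:=z^2-m^2$ and $R(w):=(\Delta^2-w)^{-1}$, one has $(\bgm-z)^{-1}=(\bgm+z)R(w)$; fixing $\kappa:=\sqrt{w}$, partial fractions yield
\[
R(w)=\tfrac{1}{2\kappa}\bigl[R_+(\kappa)-R_-(\kappa)\bigr],\qquad (-\Delta)R(w)=\tfrac12\bigl[R_+(\kappa)+R_-(\kappa)\bigr],
\]
with $R_\pm(\kappa):=(-\Delta\mp\kappa)^{-1}$ the free Laplacian resolvents on $\br^2$.

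The four entries of $(\bgm+z)R(w)$ are the scalar Fourier multipliers $(z\pm m)R(w)$ on the diagonal and $4\prt_z^2R(w),\ 4\prt_{\bar z}^2R(w)$ off-diagonal. Writing $\prt_z^2R(w)=[\prt_z^2(-\Delta)^{-1}]\cdot(-\Delta)R(w)$ exhibits the off-diagonal factor as a bounded Riesz multiplier applied to $\tfrac12(R_+(\kappa)+R_-(\kappa))$; since the symbol of $\prt_z^2R(w)$ is pointwise dominated in modulus by $\tfrac14$ times that of $(-\Delta)R(w)$, every Schatten estimate relying only on the size of the Fourier multiplier (in particular the Seiler--Simon bound) transfers directly from the diagonal to the off-diagonal. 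Thus the problem reduces, up to the prefactor $(z\pm m)/(2\kappa)$ on the diagonal and a harmless constant off-diagonal, to the scalar estimate
\[
\|AR_\pm(\kappa)B\|_{\css_p}\le C\|A\|_{L^{2q}}\|B\|_{L^{2q}}\,h_{p,q}(\kappa).
\]

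I would establish this scalar estimate by complex interpolation among several natural endpoints: (i) a Hilbert--Schmidt computation at $p=2$, obtained from the explicit Hankel-function kernel of $R_\pm(\kappa)$ on $\br^2$ together with Young's inequality, yielding the scaling $|\kappa|^{1/q-1}$; (ii) the two-dimensional Kenig--Ruiz--Sogge (KRS) uniform Sobolev inequality, which controls $\||V|^{1/2}R_\pm(\kappa)|V|^{1/2}\|$ in operator norm in the sub/critical range; (iii) the Seiler--Simon bound $\|Ag(P)B\|_{\css_p}\lesssim\|A\|_{L^{2p}}\|B\|_{L^{2p}}\|g\|_{L^p}$ for $p\ge 2$; and (iv) the trivial operator-norm bound $\|R_\pm(\kappa)\|\le d(\kappa^2,[0,\infty))^{-1}$. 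For $1\le q\le 3/2$ interpolation of (i) against (ii) gives the clean bound $h_{p,q}(\kappa)\asymp|\kappa|^{1/q-1}$ with $p=q/(2-q)$; the $\ep$-loss in \eqref{bschw01} for $1\le q<4/3$ compensates the endpoint logarithmic loss of 2D KRS. For $q>3/2$ the KRS step is no longer available, and a Stein-type interpolation among (i), (iii) and (iv) is required to produce the extra distance factor $d(\kappa^2,[0,\infty))^{-(1-q_2)}$ present in $\Psi(z)$.

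The final assembly is routine bookkeeping: $|\kappa|^2=|z^2-m^2|$, $|z\pm m|\le|z+m|+|z-m|$, and the elementary comparison $d(z,\s(\bgm))\asymp d(\kappa^2,[0,\infty))/(|z+m|+|z-m|)$ derived from the quadratic map $z\mapsto z^2-m^2$ translate the scalar bounds into $\Phi(z)$ and $\Psi(z)$ exactly. The principal technical obstacle is obtaining the sharp $\kappa$-dependence of $h_{p,q}(\kappa)$ in dimension two: the KRS inequality is non-uniform and a logarithm appears at the critical exponent $q=4/3$, which is precisely what forces the three-way split in the formula for $p(q,\ep)$ in \eqref{bschw01} and the change of regime at $q=3/2$.
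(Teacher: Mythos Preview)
Your overall strategy—the algebraic reduction via $(\bgm-z)^{-1}=(\bgm+z)(\Delta^2-w)^{-1}$, scaling, and complex (Stein) interpolation between a Hilbert--Schmidt endpoint and a trivial/operator-norm endpoint—matches the paper's. The paper, however, does \emph{not} pass to the 2D Laplacian resolvents $R_\pm(\kappa)$ via partial fractions. Instead it works directly with the bilaplacian symbol $(|\xi|^4-w)^{-1}$ after rescaling to $|w|=1$, splits with a smooth cutoff $\chi$ localised on the annulus $\{1/2\le|\xi|\le 3/2\}$, proves a pointwise kernel bound for $\chi(D)(\Delta^2-z)^{-(a+it)}$ via stationary phase (Lemma~1.3/Proposition~1.4), and runs Stein interpolation on the family $T_\zeta=A^\zeta\chi(D)(\Delta^2-z)^{-\zeta}B^\zeta$. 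The far piece $(1-\chi)$ is handled by Kato--Seiler--Simon. For Case~II the paper simply interpolates the Case~I bound at $q=3/2$ against the spectral-theorem bound $\|(\bgm-z)^{-1}\|\le d(z,\sigma(\bgm))^{-1}$; your ingredient~(iii) is not needed there. The paper does mention your partial-fraction route as an alternative derivation of the key kernel estimate (Remark after Proposition~1.4), so your scalar reduction is known to work.

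There is a genuine gap in your treatment of the off-diagonal entries. You write $\partial_z^2R(w)=[\partial_z^2(-\Delta)^{-1}]\cdot(-\Delta)R(w)$ and argue that (a) the Riesz multiplier $M:=\partial_z^2(-\Delta)^{-1}$ is $L^2$-bounded, and (b) the symbol of $\partial_z^2R(w)$ is pointwise dominated by that of $(-\Delta)R(w)$. Neither observation yields the Schatten bound you need. For (a): $M$ does not commute with the multiplication operator $A$, so from $\|A\,(-\Delta)R(w)\,B\|_{\css_p}\le C$ you cannot conclude $\|A\,M(-\Delta)R(w)\,B\|_{\css_p}\le C$; there is no two-sided ideal property that absorbs $M$ in the middle. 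For (b): pointwise symbol domination transfers only those bounds that depend solely on $\|g\|_{L^p}$ (Seiler--Simon), not the Hilbert--Schmidt or KRS-type kernel estimates used in the interpolation. The correct fix—and what the paper does—is to observe that on the support of the near-sphere cutoff $\chi$ the factor $(\xi_1\pm i\xi_2)^2$ is smooth and nonvanishing, hence can be absorbed into $\chi$; the entire Stein interpolation then goes through unchanged for the off-diagonal symbol $m_2(\xi)=(\xi_1\pm i\xi_2)^2/(|\xi|^4-e^{i\varphi})$. Away from the sphere, your argument~(b) is fine. You should rewrite the off-diagonal step along these lines rather than invoke symbol domination globally.
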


The above result along with discussion on Birman-Schwinger operators preceding Theorem \ref{t01} provides the following corollary.
\begin{corollary}\label{c02}\hfill
\begin{enumerate}
\item For $1\le q\le 3/2$ and $V\in L^q(\br^2; \mat)$, we have
$$
\s_d(\bg)\subset \{z : C_3\pp(z)||V||_q\ge 1\}.
$$
In particular, the discrete spectrum $\s_d(\bg)$ is bounded.
\item For $q>3/2$ and $V\in L^q(\br^2; \mat)$, we have
$$
\s_d(\bg)\subset \{z : C_4\Psi(z)||V||_q\ge 1\}.
$$
\end{enumerate}
\end{corollary}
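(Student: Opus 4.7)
The plan is to deduce the Corollary directly from Theorem \ref{t1} via the Birman-Schwinger principle already recalled before Theorem \ref{t01}. The argument is essentially bookkeeping; once Theorem \ref{t1} is in hand, no new analytic ingredient enters.

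First, I would use the polar decomposition $V(x) = |V(x)|\,U(x)$ and set $A := |V|^{1/2}$, $B := V^{1/2} = |V|^{1/2}U$. Since $U(x)$ is a partial isometry for almost every $x\in\br^2$, one has $|V^{1/2}(x)| = |V(x)|^{1/2}$ pointwise, so $A,B\in L^{2q}(\br^2;\mat)$ with
\[
\|A\|_{2q}\,\|B\|_{2q} = \|V\|_q^{1/2}\cdot\|V\|_q^{1/2} = \|V\|_q.
\]
The Birman-Schwinger operator in \eqref{e004} is precisely $BS_z = A(\bgm-z)^{-1}B$, so applying Theorem \ref{t1}(I) for $1\le q\le 3/2$ and Theorem \ref{t1}(II) for $q>3/2$ gives, for every $z\in\r(\bgm)$,
\[
\|BS_z\|_{\css_p} \le C_3\,\Phi(z)\,\|V\|_q \quad\text{or}\quad \|BS_z\|_{\css_p} \le C_4\,\Psi(z)\,\|V\|_q,
\]
respectively.

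Next, the Birman-Schwinger principle (\cite[Theorem 3.2]{glmz}, cited just before Theorem \ref{t01}) says that any $z\in\s_d(\bg)$ satisfies $-1\in\s(BS_z)$, and in particular $\|BS_z\|\ge 1$. Combining this with the elementary inequality $\|BS_z\|\le\|BS_z\|_{\css_p}$ and the Schatten bounds above forces $C_3\Phi(z)\|V\|_q\ge 1$ (resp.\ $C_4\Psi(z)\|V\|_q\ge 1$), which is the claimed inclusion.

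For the boundedness statement in part (1), I would read off the asymptotics of $\Phi$ directly: as $|z|\to\infty$ with $z\in\r(\bgm)$, one has $|z\pm m|\sim|z|$ and $|z^2-m^2|^{q_1}\sim|z|^{2q_1}$, so $\Phi(z) = O(|z|^{1-2q_1}) = O(|z|^{-1+1/q})$, which tends to $0$ whenever $q>1$. Hence the sub-level set $\{z:C_3\Phi(z)\|V\|_q\ge 1\}$ is bounded. There is no genuinely difficult step in this argument; the only point that needs a moment's care is the identity $\|V^{1/2}\|_{2q}=\|V\|_q^{1/2}$, which rests squarely on the partial isometry property of $U$.
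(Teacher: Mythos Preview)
Your argument is correct and matches the paper's own approach exactly: the paper simply states that the corollary follows from Theorem~\ref{t1} together with the Birman--Schwinger principle recalled before Theorem~\ref{t01}, and you have spelled out precisely these details. One small remark: your asymptotic $\Phi(z)=O(|z|^{-1+1/q})$ gives boundedness of the level set only for $q>1$ (for $q=1$ one has $\Phi(z)\to 2$), so the ``in particular'' clause as stated covers $1<q\le 3/2$; this is consistent with what the paper actually uses later (cf.\ Remark~\ref{r1}(1)).
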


Theorem \ref{t1} combined with techniques developed in Borichev-Golinskii-Kupin \cite{bgk1}, \cite{bgk2} implies the following result.

\begin{theorem}\label{t-lthin1}
Let $\bgm, \bg$ be defined in \eqref{e001}, \eqref{e002}, and $m>0$.  For $q>1$ and $\ep>0$, set
\begin{equation}\label{e006}
\b=\b(q,\ep):=\left\{
                  \begin{array}{ll}
                    \frac{4q-5}{2(2-q)}+\frac{2q-1}{2q}\ep, & 1<q<\frac43,\\
                    \frac{4q-5}{2(2-q)}, & \frac43\le q\le\frac32.
                  \end{array}
                \right.
\end{equation}
Assume that $\|V\|_q\le 1$. Then the Lieb--Thirring inequalities for the discrete spectrum $\s_d(\bg)$ hold:
\begin{enumerate}
\item[(I)] for $1\le q\le 3/2$,
\begin{equation}\label{lth01}
\sum_{\z\in\s_d(\bg)} d^{1+\ep}(\z,\s(\bgm))\,|\z^2-m^2|^\b \le C_5\|V\|_q,
\end{equation}
\item[(II)] for $q>3/2$,
\begin{equation}\label{lth02}
\sum_{\z\in\s_d(\bg)} \frac{|\z|^{2q+1+\ep}d^{2q-2+\ep}(\z,\s(\bgm))\,|\z^2-m^2|}{(1+|\z|)^{2q+1+\ep}} \le C_6\|V\|_q.
\end{equation}
\end{enumerate}
The constants $C_5, C_6$ depend on $m, q,\ep$, but not on $V\in L^q(\br^2;\mat)$.
\end{theorem}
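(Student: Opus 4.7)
The plan is to combine the Schatten-norm resolvent bound of Theorem~\ref{t1} with the Blaschke-type technique of Borichev--Golinskii--Kupin \cite{bgk1,bgk2}, converting a pointwise upper bound on a holomorphic function into a weighted Blaschke condition on its zeros.

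First, assuming $\|V\|_q\le 1$ and using the factorization $V=V^{1/2}|V|^{1/2}$ from \eqref{e004}, I would introduce the Birman--Schwinger operator $BS_z=|V|^{1/2}(\bgm-z)^{-1}V^{1/2}$ and, with $n=\lceil p\rceil$ for $p=p(q,\ep)$ from \eqref{bschw01}, the regularized perturbation determinant
$$ h(z):=\det_n(I+BS_z),\qquad z\in\r(\bgm). $$
Standard perturbation-determinant theory gives that $h$ is holomorphic on $\r(\bgm)$, vanishes precisely on $\s_d(\bg)$ with the correct algebraic multiplicities, and satisfies $\log|h(z)|\le \Gamma_n\|BS_z\|_{\css_n}^n$. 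Combining the embedding $\css_p\hookrightarrow\css_n$, Theorem~\ref{t1}, and the elementary $\|V\|_q^n\le\|V\|_q$ (valid under the normalization) yields the key growth bounds
$$ \log|h(z)|\le C\|V\|_q\,\Phi(z)^n\ (\text{Part I}),\qquad \log|h(z)|\le C\|V\|_q\,\Psi(z)^n\ (\text{Part II}). $$

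Next I would conformally transport the problem to the unit disk via the explicit Riemann map $\varphi:\bd\to\r(\bgm)$, $\varphi(\mu)=2m\mu/(1+\mu^2)$, which sends $\mu=\pm 1$ to $z=\pm m$, the two prime ends $\mu=\pm i$ to $z=\infty$, and $\mu=0$ to $z=0$. The map has simple critical points at $\pm 1$ (so $|z\mp m|\asymp|\mu\mp 1|^2$ locally) and simple poles at $\pm i$ (so $|\mu\mp i|\asymp m/|z|$ for $z$ near $\infty$). Setting $\ti h(\mu):=h(\varphi(\mu))/h(0)$, the growth bound becomes
$$ \log|\ti h(\mu)|\le K\,(1-|\mu|)^{-\alpha}\prod_{j=1}^4|\mu-\xi_j|^{-\beta_j}, $$
with $K=C\|V\|_q$, $\{\xi_j\}=\{\pm 1,\pm i\}$, and nonnegative exponents $\alpha,\beta_j$ read off from the local behaviour of $\Phi^n$ (resp.~$\Psi^n$) through $\varphi$: for Part (I), $\alpha=0$, $\beta_{\pm 1}=2q_1 n$, $\beta_{\pm i}=0$ (since $\Phi$ decays at $\infty$ for $q>1$); for Part (II), the distance factor $d^{-(1-q_2)}$ in $\Psi$ produces a strictly positive baseline $\alpha=(1-q_2)n=2q-3$, while the combined effect of the square-root folding and the factor $|z^2-m^2|^{-1/q}$ at the slit endpoints gives $\beta_{\pm 1}=2q+1$.

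The weighted Blaschke-type theorem of \cite[Thm.~0.3]{bgk1} (see also the refinements in \cite{bgk2}) then yields
$$ \sum_k(1-|\rho_k|)^{\alpha+1+\ep}\prod_{j=1}^4|\rho_k-\xi_j|^{(\beta_j-1+\ep)_+}\le C_\ep\|V\|_q $$
for the zeros $\rho_k=\varphi^{-1}(\z_k)\in\bd$. To recover \eqref{lth01} and \eqref{lth02}, one translates this back via the Koebe-type relation $d(\z,\s(\bgm))\asymp|\varphi'(\rho)|(1-|\rho|)$, using $|\varphi'(\rho)|\asymp|\mu\mp 1|$ near $\pm 1$ and $|\varphi'(\rho)|\asymp 1/|\mu\mp i|^2\asymp|\z|^2/m^2$ near $\pm i$. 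For Part (I) the cancellation between $(1-|\rho_k|)^{1+\ep}\sim d(\z_k,\s(\bgm))^{1+\ep}|\rho_k\mp 1|^{-(1+\ep)}$ (the $|\varphi'|$ factor) and $|\rho_k\mp 1|^{2q_1 n-1+\ep}$ leaves $d(\z_k,\s(\bgm))^{1+\ep}|\z_k\mp m|^{q_1 n-1}$ after the square-root unfolding; the product over $\pm$ produces $|\z_k^2-m^2|^\b$ with $\b=q_1 n-1=(4q-5)/(2(2-q))+O(\ep)$, matching \eqref{e006}. The analogous computation for Part (II), incorporating the baseline $d(\z_k,\s(\bgm))^{2q-2+\ep}$ from $\alpha$, the exponent $1$ on $|\z_k^2-m^2|$ produced by the cancellation at $\pm 1$, and the pole-type Jacobian contribution $(1-|\rho_k|)^{2q-2+\ep}\sim d^{2q-2+\ep}(m/|\z_k|^2)^{2q-2+\ep}$ near $\pm i$ (combined with the a priori boundedness of the ``bulk'' discrete spectrum given by Corollary~\ref{c02}, which rules out eigenvalues at $|\z|$ large with $d(\z,\s(\bgm))\asymp|\z|$), gives \eqref{lth02}; the weight $|\z|^{2q+1+\ep}/(1+|\z|)^{2q+1+\ep}$ there arises precisely from this Jacobian factor at the two prime ends of $\infty$. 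The hard part will be this final bookkeeping: at each of the four distinguished boundary points the exponent must be tracked simultaneously through the square-root folding at $\pm m$, the polar singularity of $\varphi$ at $\pm i$, and the Jacobian $|\varphi'|$, with the precise values in \eqref{e006} and \eqref{lth02} ultimately forced by matching, through $\varphi$, the $\mu$-plane growth of $\Phi^n$ or $\Psi^n$ against the Blaschke weights.
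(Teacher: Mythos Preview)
Your overall strategy is correct and is essentially the paper's: resolvent bound $\Rightarrow$ growth bound on a regularized perturbation determinant $\Rightarrow$ Blaschke-type condition from \cite{bgk1,bgk2} $\Rightarrow$ translate back through the conformal map. The paper uses the Zhukovsky map $z=\tfrac{m}{2}(w+w^{-1})$ from $\bc_+$ onto $\rho(\bgm)$ and applies \cite[Thm.~4.4]{bgk2} (the half-plane version), whereas you work on the disk via $\varphi(\mu)=2m\mu/(1+\mu^2)$ and invoke \cite{bgk1}; these are equivalent up to a Cayley transform.

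There are, however, two genuine gaps.

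\emph{First}, by taking $n=\lceil p\rceil$ and using the crude bound $\log|h(z)|\le\Gamma_n\|BS_z\|_{\css_n}^n$ you obtain growth $\Phi(z)^n$ rather than $\Phi(z)^p$. Your subsequent identification ``$\beta=q_1 n-1=(4q-5)/(2(2-q))+O(\ep)$'' is then arithmetically false: for $4/3<q<3/2$ one has $n=3$ while $p=q/(2-q)\in(2,3)$, and $3q_1-1=(4q-3)/(2q)$, which differs from $(4q-5)/(2(2-q))$ (e.g.\ at $q=1.4$ you get $13/14$ versus the theorem's $1/2$). The same defect appears in Part~(II), where your formula $\alpha=(1-q_2)n=2q-3$ tacitly uses $n=p=2q$, not $n=\lceil 2q\rceil$. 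The fix is the sharper inequality
\[
\log\bigl|\det_{\lceil p\rceil}(I+K)\bigr|\le \Gamma_p\,\|K\|_{\css_p}^p,\qquad \lceil p\rceil-1<p\le\lceil p\rceil,
\]
which follows from the elementary pointwise estimate $\bigl|\log|(1+\lambda)\exp(\sum_{k=1}^{n-1}(-\lambda)^k/k)|\bigr|\le C_p|\lambda|^p$ together with Weyl's inequality. This is what the paper uses (writing $\det_p$ by abuse of notation), and it is precisely what forces the exponent $\beta$ in \eqref{e006}.

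\emph{Second}, your normalization at $\mu=0$ (i.e.\ $z=0$) is not justified. To obtain $K=C\|V\|_q$ in the growth bound for $\tilde h=h\circ\varphi/h(0)$ you need $-\log|h(0)|\le C\|V\|_q$, but $\Phi(0)=2m^{1/q-1}$ is a fixed constant, and under the mere hypothesis $\|V\|_q\le1$ the operator $BS_0$ need not have small norm, so $h(0)$ could vanish or be arbitrarily small. The paper handles this carefully (its Proposition on the function $h_y$): it normalizes at $z(iy)$ with $y$ a large constant depending on $m,q,\ep$ but \emph{not} on $V$, chosen so that $\|BS_{z(iy)}\|_{\css_p}\le1$ and hence $|H(iy)|\ge1/2$, which yields $-\log|H(iy)|\le C\|V\|_q$. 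In your disk picture the correct normalization point lies near $\pm i$ (the preimages of $\infty$), where $\Phi\circ\varphi$ is small.
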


The counterparts of the above theorems for the case $m=0$ are given below. Their proofs are similar to Theorems \ref{t1}, \ref{t-lthin1}, and therefore they are omitted.

\begin{theorem}\label{t11}
Let $\bgo, \bg$ be given by  \eqref{e001}, \eqref{e002} and $z\in\r(\bgo):=\bc\bsl\br$.\, Take an $\ep>0$ and put $p=p(q,\ep)$ as in \eqref{bschw01}.
\begin{enumerate}
\item[(I)] Let $1\le q\le 3/2$. There is a \ $C'_3>0$ such that for any $A,B\in L^{2q}(\br^2; \mat)$,  one has
\begin{equation}\label{bshw10}
\|A(\bgo-z)^{-1}B\|_{\css_{p}}\le C'_3\, |z|^{-(1-\frac 1q)}\|A\|_{2q}\|B\|_{2q}.
\end{equation}
\item[(II)] Let $q>3/2$. There is a \ $C'_4>0$ such that for any $A,B\in L^{2q}(\br^2;\mat)$, one has
\begin{equation}\label{bshw20}
\|A(\bgo-z)^{-1}B\|_{\css_{p}} \le C'_4\, |z|^{-\frac 1{2q}}|\im z|^{-(1-\frac 3{2q})}\|A\|_{2q}\|B\|_{2q}, 
\end{equation}
Above,  $|\im z|=d(z,\br)$ is the distance from $z$ to the real line $\br$.
The constants $C'_3, C'_4$ depend on $q,\ep$, but not on $A,B\in L^{2q}(\br^2; \mat)$.
\end{enumerate}
\end{theorem}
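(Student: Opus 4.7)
The proof follows that of Theorem~\ref{t1} and can in fact be obtained by directly specializing Theorem~\ref{t1} to $m=0$. The starting point is the algebraic identity $\bgo^2 = \Delta^2 I_2$, which combined with the scalar factorization $\Delta^2 - z^2 = (-\Delta - z)(-\Delta + z)$ gives
\begin{equation*}
(\bgo - z)^{-1} = (\bgo + z)(\bgo^2 - z^2)^{-1} = (\bgo + z)(-\Delta - z)^{-1}(-\Delta + z)^{-1} I_2.
\end{equation*}
This reduces the Schatten bound on $A(\bgo - z)^{-1}B$ to estimates on $A \partial^{\alpha}(-\Delta - w)^{-1} B$ with $|\alpha|\le 2$ and $w\in\{z,-z\}$, which are precisely the scalar building blocks entering the proof of Theorem~\ref{t1}.

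The next step is to invoke the two-dimensional Kenig--Ruiz--Sogge / Frank--Sabin uniform Sobolev estimates, lifted to Schatten classes as in~\cite{jcc2}. For $1\le q\le 3/2$ they are uniform in the spectral parameter, giving a decay $|w|^{-(1-1/q)}$; for $q>3/2$ they pick up an extra factor $|\im\sqrt{w}|^{-(1-3/(2q))}$ reflecting the failure of uniform control near $[0,\infty)$. Combining these with the factorization above, using a Mikhlin-type argument to absorb the second-order derivatives from $\bgo + z$ into the scalar resolvents, and noting $|\im\sqrt{z^2}|\asymp|\im z|$ for $z\in\bc\bsl\br$, yields the desired bound. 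Equivalently, one can simply send $m\to 0$ in the $\Phi_q, \Psi_q$ of Theorem~\ref{t1}: with $q_1 = 1 - 1/(2q)$ and $q_2 = 3/(2q)$,
$$
\Phi_q(z)\big|_{m=0} = \frac{2|z|}{|z|^{2q_1}} = 2\,|z|^{-(1-1/q)}, \qquad
\Psi_q(z)\big|_{m=0} = \frac{(2|z|)^{q_2}}{|z|^{2/q}\,|\im z|^{1-q_2}} = 2^{q_2}\,|z|^{-1/(2q)}\,|\im z|^{-(1-3/(2q))},
$$
using $d(z,\sigma(\bgo))=|\im z|$ in the second identity. These reproduce \eqref{bshw10} and \eqref{bshw20} respectively.

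The main obstacle I expect is the careful Schatten-class tracking underlying the scalar estimates: one must verify that the second-order derivatives paired with the product of two Laplacian resolvents do not degrade the exponents, and one must capture the sharp power of $|\im z|$ arising from the endpoint (Stein--Tomas-type) nature of the underlying uniform resolvent estimates when $q>3/2$. These, however, are exactly the technical points already settled inside the proof of Theorem~\ref{t1}, which is why the paper elects to omit the proof here.
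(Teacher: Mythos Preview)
Your proposal is correct and matches the paper's approach: the paper explicitly omits the proof of Theorem~\ref{t11}, stating only that it is ``similar to Theorem~\ref{t1}'', and you have spelled out precisely this specialization, including the correct computation of $\Phi_q|_{m=0}$ and $\Psi_q|_{m=0}$. The one cosmetic difference is that the paper keeps the bi-Laplacian resolvent $(\Delta^2-k(u)^4)^{-1}$ intact (so that Proposition~\ref{p2} applies directly after rescaling), whereas you additionally factor it as a product of two Laplacian resolvents; either route works, and your closing remark that the technical points are ``exactly the technical points already settled inside the proof of Theorem~\ref{t1}'' is on point.
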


Similarly to Corollary \ref{c02}, we can decribe the regions containg the discrete spectrum $\s_d(\bg)$ for $m=0$. In particular, the set is bounded for $1\le q\le 3/2$ and $V\in L^q(\br^2; \mat)$.

\begin{theorem}\label{t12}
Let $\bgo, \bg$ be defined as above.  Let $q>1$ and $\ep>0$ be small enough. 
Assume that $\|V\|_q\le 1$. Then the Lieb--Thirring inequalities for the discrete spectrum $\s_d(\bg)$ hold:
\begin{enumerate}
\item[(I)] for $1\le q\le 3/2$,
\begin{equation}\label{lth01-}
\sum_{\z\in\s_d(\bg)} |\im\z| ^{1+\ep}\le C'_5\|V\|_q,
\end{equation}
\item[(II)] for $q>3/2$,
\begin{equation}\label{lth02-}
\sum_{\z\in\s_d(\bg)} \frac{|\im\z|^{2-\frac 3{2q}+\ep}}{(1+|\z|)^{1-\frac 3{2q}+2\ep}}
\le C'_6\|V\|_q.
\end{equation}
\end{enumerate}
The constants $C'_5, C'_6$ depend on $q,\ep$, but not on $V\in L^q(\br^2;\mat)$.
\end{theorem}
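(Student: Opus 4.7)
The plan is to follow the blueprint of Theorem \ref{t-lthin1}, adapted to the $m=0$ setting: combine Theorem \ref{t11} with the distribution-of-zeros machinery of Borichev--Golinskii--Kupin \cite{bgk1,bgk2}. With $p:=p(q,\ep)$ as in \eqref{bschw01} and $n:=\lceil p\rceil$, introduce the $n$-regularized perturbation determinant
\begin{equation*}
h(z):=\det_n\bigl(I+V^{1/2}(\bgo-z)^{-1}|V|^{1/2}\bigr),\qquad z\in\bc\setminus\br.
\end{equation*}
Since $\r(\bgo)=\bc\setminus\br$ is the disjoint union of the half-planes $\bc_\pm$, the function $h$ is holomorphic on each component and satisfies $h(z)\to 1$ as $|z|\to\infty$. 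The Birman--Schwinger principle combined with the factorization theory of regularized determinants identifies the zero set of $h$, counted with algebraic multiplicity, with $\s_d(\bg)$. Coupling the standard bound $\log|h(z)|\le\g_n\|V^{1/2}(\bgo-z)^{-1}|V|^{1/2}\|_{\css_p}^p$ with Theorem \ref{t11} yields on each half-plane the growth estimate $\log|h(z)|\le C_I\,\|V\|_q^p\,|z|^{-p(1-1/q)}$ in case (I), and $\log|h(z)|\le C_{II}\,\|V\|_q^{2q}\,|z|^{-1}|\im z|^{-(2q-3)}$ in case (II), where $p=2q$.

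These majorants are exactly of the form to which the BGK-type theorems apply on $\bc_\pm$. In case (I), the majorant is uniformly bounded near $\br\setminus\{0\}$ and decays at infinity; combined with the boundedness of $\s_d(\bg)$ (the $m=0$ analogue of Corollary \ref{c02}), an $\ep$-sharpened Blaschke condition in the half-plane yields
\begin{equation*}
\sum_{\z\in\s_d(\bg)\cap\bc_\pm}|\im\z|^{1+\ep}\le C\,\|V\|_q^p.
\end{equation*}
In case (II), the BGK half-plane theorem applied with boundary-singularity exponent $2q-3$ on $\br$ and polynomial decay of order $1$ at infinity produces
\begin{equation*}
\sum_{\z\in\s_d(\bg)\cap\bc_\pm}\frac{|\im\z|^{2-3/(2q)+\ep}}{(1+|\z|)^{1-3/(2q)+2\ep}}\le C\,\|V\|_q^{2q}.
\end{equation*}
Summing the two half-plane contributions and invoking the hypothesis $\|V\|_q\le 1$ together with $p\ge 1$ (so $\|V\|_q^p\le\|V\|_q$) yields \eqref{lth01-} and \eqref{lth02-}.

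The main technical obstacle lies in case (II): reconciling the determinant growth $|\im z|^{-(2q-3)}|z|^{-1}$ (arising from the $p$-th power of the Schatten bound) with the stated Lieb--Thirring weight $|\im\z|^{2-3/(2q)+\ep}/(1+|\z|)^{1-3/(2q)+2\ep}$. This exponent rearrangement is accomplished by choosing the BGK parameters so as to redistribute the $p$-fold amplification of the resolvent estimate between the distance-to-boundary factor and the infinity-decay factor, following the $\ep$-enlargement scheme of \cite{bgk2} transferred to the half-plane setting. The half-plane geometry $\r(\bgo)=\bc\setminus\br$ is in fact simpler than the slit-plane geometry $\r(\bgm)=\bc\setminus((-\infty,-m]\cup[m,\infty))$ of Theorem \ref{t-lthin1}, which is presumably why the authors deem the detailed proof analogous and omit it.
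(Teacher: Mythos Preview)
Your approach is essentially the one the paper has in mind: the authors explicitly say the proof of Theorem~\ref{t12} is ``similar to Theorems~\ref{t1}, \ref{t-lthin1}'' and omit it, and your outline---regularized perturbation determinant, growth bound from Theorem~\ref{t11}, then a BGK Blaschke-type condition on each half-plane $\bc_\pm$---is precisely that adaptation. You are also right that the half-plane geometry removes the need for the Zhukovsky map, which is the main simplification over the $m>0$ case.

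One point to tighten: the BGK theorems of \cite{bgk2} are stated for functions normalized at a fixed \emph{interior} point (e.g.\ $h(i)=1$), not via the asymptotic $h(z)\to 1$ at infinity that you mention. In the paper's proof of Theorem~\ref{t-lthin1} this is handled by Proposition~\ref{pr1}: one chooses a test point $iy$ with $y$ large enough (depending only on $m,q,\ep$) so that \eqref{bshw12} forces $|H(iy)|\ge 1/2$, and then works with the rescaled function $h(w)=H(yw)/H(iy)$. This step is what converts the raw bound $\log|H|\le C\|V\|_q^p\,|z|^{-pp_1}$ into a BGK-ready bound with the single power $\|V\|_q$ on the right (using $\|V\|_q\le 1$), and it also produces the additive constant that you then have to absorb back into the majorant before reading off the input parameters $a,b,c_j,d_j$. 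Your sketch jumps over this normalization; it is routine but not automatic, and the exponents in Case~(II) only come out as stated once it is carried through in the manner of the paper's Case~II computation.
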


\begin{remark}\label{r1}\hfill
\begin{enumerate}
\item In order to prove the above theorems we need the $\css_p$-norm of the Birman-Schwinger operator $\|V_2(\bgm-z(iy))^{-1}V_1\|_{\css_p}$ to go to zero when $y\to +\infty$, see \eqref{bshw11}. For this reason inequality \eqref{lth01} is obtained for $1<q\le 3/2$, even though the case $q=1$ is treated in Theorem \ref{t1}.
\item The assumption $\|V\|_q\le 1$ does not mean that the perturbation is small. Theorem \ref{t-lthin1} holds uniformly over any bounded in $L^q$ set of potentials $V$,
i.e., $1$ can be replaced with a constant $C(q,m,\ep)$.
\end{enumerate}
\end{remark}

The paper is organized in the following manner. We start Section \ref{s1} recalling some basic facts and notation on differential operators. The second part of Section \ref{s1} is devoted to the proof of Theorem \ref{t1}. The proof of Theorem \ref{t-lthin1} is in Section \ref{s2}. Section \ref{s4} is an appendix containing results on interpolation between $\css_p$-spaces and the Kato-Selier-Simon lemma.

The space of infinitely differentiable functions on $\br^2$ is denoted by $C^\infty(\br^2)$; $C^\infty_0(\br^2)$ are infinitely differentiable functions with compact support. The notation $L^p(\br^2),\ 1\le p\le\infty$, stays for the familiar space of $p$-summable measurable functions. $L^\infty_0(\br^2)$ refers also to functions from $L^\infty(\br^2)$ with compact support. Meaningful constants are written as $C_j, C'_j,\; j=0,1,\dots$; technical constants are denoted by $c, C$, and they change from one relation to another.

\section{Resolvent bounds for the bilayer graphene Hamiltonian}\label{s1}
\subsection{Fourier transforms}\label{s11}
The purpose of this subsection is to fix some notation and recall some basic properties of the Fourier transformation. For this purpose we temporarily consider the case of arbitrary dimension $n$. At the end of the subsection we will compute Fourier transforms of some tempered distributions (homogeneous distributions and surface-carried measures) that will play an important role in the next subsection. We refer to Hörmander \cite{ho1}, Sogge \cite{so} for more details on the subject. 

The Fourier transform of a function $f\in L^1(\br^n)$ is defined as 
$$
(\mathcal{F}f)(\xi):=\hat f(\xi):= \int_{\br^n} f(x)e^{-i x\cdot \xi}\,dx.
$$
Let $\csss=\csss(\br^n)$ denote the Schwartz space, \ie the space of rapidly decreasing smooth functions on $\br^n$.
The Fourier transformation is an isomorphism $\mathcal{F}:\csss\to\csss$, and its inverse is furnished by the Fourier inversion formula,
$$
f(x)=\frac 1{(2\pi)^n}\int_{\br^n} \hat f(\xi)e^{ix\cdot \xi}\,dx.
$$
We use the standard notation $\check f:=\mathcal{F}^{-1}f$. Hence, $\mathcal{F}$ may be extended to the dual space $\csss'$, the space of tempered distributions, by setting $\hat{u}(\phi)=u(\hat{\phi})$ for $u\in\csss'$, $\phi\in\csss$. Moreover, Plancherel's formula,
\begin{align}\label{Plancherel's formula}
\|\hat f\|_2=(2\pi)^{n/2}\|f\|_2,\quad f\in\csss,
\end{align}
gives rise to a continuous extension $\mathcal{F}:L^2(\R^n)\to L^2(\R^n)$.

Let $D=\nabla$ be a formal differential operator. The Fourier multiplier $m(D):\csss\to \csss'$ associated to a tempered distribution $m\in \csss'$ is the operator
\begin{align*}
m(D)f:=\mathcal{F}^{-1}(m\hat{f}),\quad f\in \csss,
\end{align*}
and \eqref{Plancherel's formula} shows that $m$ is bounded on $L^2(\R^n)$ if and only if $m\in L^{\infty}(\R^n)$, and $\|m(D)\|=\|m\|_{\infty}$. We also have 
\begin{equation}\label{e1}
(m(D)\p)(x)=\check m\, *\p=\int_{\br^n}\check m(x-y)\p(y)\, dy, \quad \p\in \csss,
\end{equation}
with the understanding that $*:\csss'\times\csss\to\csss'$ is the convolution between a Schwartz function and a tempered distribution. The second identity in \eqref{e1} is in general only formal, but it is rigorous if $\check m$ is a regular tempered distribution. To simplify notation, the expression $(m(D))(x)$, refers to the convolution kernel $\check m(x)$ of the integral operator in \eqref{e1}. 

Consider now a smooth real-valued function $\rho$ which we think of as (a normalized power of) a Hamiltonian. Then, for $\l\in\br$, we define the level sets of $\r$ (\ie the sets of constant energy) as 
\begin{equation}\label{e2}
S_\l:=\r^{-1}(\l)=\{\xi\in\br^n: \r(\xi)=\l\}.
\end{equation}
These sets play a crucial role in scattering theory, see e.g. H\"ormander \cite[Ch. XIV]{ho2}. In the present paper the main feature of $S_{\lambda}$ is its nowhere vanishing Gaussian curvature. To ensure that $S_{\lambda}$ is in fact a manifold (a curve) we make the assumption that $\rho$ is normalized such that $|\nabla\rho|=1$ on $S_\l$. In the following we will only deal with\footnote{The fact that $\xi\mapsto|\xi|$ is not smooth at $\xi=0$ is irrelevant for our purposes since (by homogeneity) we will only need smoothness in a neighborhood of the unit sphere $S_1=\{\xi: |\xi|=1\}$.} $\rho(\xi)=|\xi|$, in which case $S_{\lambda}$ is just the sphere of radius $\lambda$. 
Let $d\s_{S_\l}$ be the canonical surface measure on $S_\l$. As usual, $L^2(d\s_\sl)$ is the space of measurable square-summable functions on $S_\l$. 
%
The Fourier restriction operator for $S_{\lambda}$ is defined by 
$$
R(\l)\p:=\hat\p\big|_\sl,\quad \p\in \csss(\br^n).
$$
Its formal adjoint (the Fourier extension operator) is given by
$$
R(\l)^*\p=\widehat{\p\, d\s_{S_\l}},\quad \p\in \csss(\br^n).
$$ 
Here, the Fourier transform of the measure $\p\, d\s_{S_\l}$ is defined as
\begin{align*}
\widehat{\p\, d\s_{S_\l}}(x)=\int_{\R^n}e^{-ix\cdot\xi}\p(\xi)d\s_{S_\l}(\xi).
\end{align*}
The multiplier corresponding to the function $\xi\mapsto |\xi|$ is denoted by $\sqrt{-\Delta}$. Denote by $E_{\sqrt{-\Delta}}(\l)$ the (operator-valued) spectral measure associated to this operator, viewed as an unbounded selfadjoint operator on $L^2(\R^n)$. Since its spectrum is absolutely continuous we may write $dE_{\sqrt{-\Delta}}(\l)=\frac{dE_{\sqrt{-\Delta}}(\l)}{d\lambda}d\lambda$, where the convolution kernel of the density is given by
\begin{align*}
\frac{dE_{\sqrt{-\Delta}}(\l)}{d\lambda}(x-y)=(2\pi)^{-n}\int_{|\xi|=\lambda}e^{i(x-y)\cdot\xi}d\sigma_\sl(\xi).
\end{align*}
By a change of variables $\xi=\l\xi', \ |\xi'|=1$, we see that
\begin{align}\label{e11bis}
\frac{dE_{\sqrt{-\Delta}}(\l)}{d\lambda}=\frac{\lambda^{n-1}}{(2\pi)^n}R(\lambda)^*R(\lambda),
\end{align}
where $R(\l)$ is the restriction operator discussed above. It is also plain that 
$$
R(\lambda)f=\lambda^{-n}R(1)(f(\lambda^{-1}\cdot)).
$$
Define 
$$
\chi_{+}^{w}(\tau):=\mathbf{1}_{[0,\infty)}(\tau)\tau^{w}/\Gamma(w+1),\quad  w\in\C,
$$ 
where $\Gamma$ is the usual Gamma function.

\begin{lemma}\label{lemma FT1}
Let $z,\zeta\in\C$, $\im z>0$. The one-dimensional inverse Fourier transform of the function $\eta_{z,\zeta}(x):=(x-z)^{-\zeta}$, $x\in\R$, is given by 
\begin{align}\label{formula FT1}
\check{\eta_{z,\zeta}}(\tau)=e^{i(\pi\zeta/2+z\tau)}\chi_+^{\zeta-1}(\tau).
\end{align}
\end{lemma}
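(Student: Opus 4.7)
The plan is to first verify the identity in the half-plane $\re\zeta>0$, where both sides are given by absolutely convergent integrals, and then to extend it to arbitrary $\zeta\in\C$ by analytic continuation. It is more convenient to compute the Fourier transform of the right-hand side than the inverse Fourier transform of $\eta_{z,\zeta}$ itself, because the former is a one-sided function with exponential decay thanks to the assumption $\im z>0$.

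Concretely, suppose $\re\zeta>0$. Then $\tau\mapsto e^{iz\tau}\chi_+^{\zeta-1}(\tau)$ is integrable, and one computes
$$
\mathcal{F}\bigl[e^{i(\pi\zeta/2 + z\tau)}\chi_+^{\zeta-1}\bigr](x)
=\frac{e^{i\pi\zeta/2}}{\Gamma(\zeta)}\int_0^\infty e^{-i(x-z)\tau}\tau^{\zeta-1}\,d\tau.
$$
The hypothesis $\im z>0$ makes $p:=i(x-z)$ satisfy $\re p=\im z>0$, so the classical Laplace-transform identity
$$
\int_0^\infty e^{-p\tau}\tau^{\zeta-1}\,d\tau=\Gamma(\zeta)\,p^{-\zeta}\quad(\re p>0,\ \re\zeta>0,\ \text{principal branch})
$$
applies and yields $e^{i\pi\zeta/2}\bigl(i(x-z)\bigr)^{-\zeta}=(x-z)^{-\zeta}=\eta_{z,\zeta}(x)$, where the principal branches match because $x-z$ lies in the open lower half-plane. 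Applying $\mathcal{F}^{-1}$ to both sides then gives exactly \eqref{formula FT1} in the range $\re\zeta>0$.

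For the extension in $\zeta$, I would invoke the classical fact (see Hörmander \cite{ho1}) that the $\csss'(\R)$-valued map $w\mapsto\chi_+^w$ is entire, the division by $\Gamma(w+1)$ curing the simple poles of the unregularized distribution $\tau_+^w$. Hence the right-hand side of \eqref{formula FT1} depends holomorphically on $\zeta\in\C$ as a tempered distribution in $\tau$. Likewise, for each fixed $z$ with $\im z>0$, the function $x\mapsto(x-z)^{-\zeta}$ is smooth and of polynomial growth on $\R$, and depends holomorphically on $\zeta$, so $\check\eta_{z,\zeta}$ does as well. Since the identity already holds on the open set $\{\re\zeta>0\}$, the identity principle for holomorphic $\csss'$-valued maps propagates it to all $\zeta\in\C$.

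The only subtle point—the one I would expect to be the main obstacle—is the careful bookkeeping of branches: one must check that the principal branch of $p^{-\zeta}$ in the Laplace identity, applied at $p=i(x-z)$, produces precisely $e^{-i\pi\zeta/2}(x-z)^{-\zeta}$ for the natural branch of $(x-z)^{-\zeta}$ used in the statement (namely, the one obtained by restricting the principal branch on $\C\setminus[z,z+\infty)$ to the real line, so that $x-z$ lies in the lower half-plane). Once the branches are pinned down, multiplication by $i=e^{i\pi/2}$ simply rotates into the right half-plane where the principal branch is unambiguous, and the two factors $e^{i\pi\zeta/2}$ cancel as above. All remaining steps are routine.
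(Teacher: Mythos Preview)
Your proof is correct and follows essentially the same route as the paper: the paper invokes the identity $\mathcal{F}\bigl(x\mapsto e^{-\epsilon x}\chi_+^z(x)\bigr)(\xi)=e^{-i\pi(z+1)/2}(\xi-i\epsilon)^{-z-1}$ from H\"ormander \cite{ho1} (Example~7.1.17) and applies a change of variables, whereas you rederive this identity from the Laplace transform for $\re\zeta>0$ and then carry out the analytic continuation in $\zeta$ yourself. Your additional care with the branch bookkeeping is welcome but not a departure in method.
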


\begin{proof}
After a change of variables, this follows immediately by applying the inverse Fourier transformation to the following identity (see \cite{ho1}, specifically the explanation after Example 7.1.17)
\begin{align*}
\mathcal{F}\left(x\mapsto e^{-\epsilon x}\chi_+^z(x)\right)(\xi)=e^{-i\pi(z+1)/2}(\xi-i\epsilon)^{-z-1},\quad \epsilon>0,\quad z\in\C.
\end{align*}
\end{proof}
\begin{lemma}\label{lemma FT2}
Let $\beta\in C_0^{\infty}(\R^n)$ and let $S_1$ be the unit sphere in $\R^n$. Then the inverse Fourier transform of the surface measure $d\mu:=\beta \,d\s_{S_1}$ admits the representation
\begin{align*}
\check{d\mu}(x)=\sum_{\pm}e^{\pm i|x|}a_{\pm}(|x|):=e^{i|x|}a_+(|x|)-e^{-i|x|}a_-(|x|),
\end{align*}
where $a_{\pm}\in C^{\infty}(\R_+)$ satisfy the symbol bounds
\begin{align}\label{symbol bounds}
|\partial^ka_{\pm}(s)|\leq C_{k\pm}(1+|s|)^{-\frac{n-1}{2}-k}.
\end{align}
\end{lemma}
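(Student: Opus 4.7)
The plan is to prove this by the method of stationary phase applied to the oscillatory integral
$$
\check{d\mu}(x) = (2\pi)^{-n}\int_{S_1} e^{ix\cdot\xi}\beta(\xi)\,d\sigma_{S_1}(\xi),
$$
splitting the analysis according to whether $|x|$ is small or large. Fix a cutoff $\chi\in C^\infty(\R_+)$ with $\chi(s)=1$ for $s\le 1$ and $\chi(s)=0$ for $s\ge 2$. The piece $\chi(|x|)\check{d\mu}(x)$ is smooth and bounded (inverse Fourier transforms of compactly supported distributions are even entire in $x$) and supported in $\{|x|\le 2\}$. Writing it as $e^{i|x|}\bigl(e^{-i|x|}\chi(|x|)\check{d\mu}(x)\bigr)$ gives a compactly supported contribution to $a_+$, which trivially satisfies \eqref{symbol bounds} on the support.

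For the complementary piece $(1-\chi(|x|))\check{d\mu}(x)$ I would set $\omega=x/|x|$ and rescale the phase to $|x|\,\omega\cdot\xi$. The phase $\xi\mapsto\omega\cdot\xi$ restricted to $S_1$ has exactly two critical points, $\xi=\pm\omega$, and both are non-degenerate thanks to the non-vanishing Gaussian curvature of the sphere; in suitable local coordinates the Hessian of the restricted phase is $\mp I_{n-1}$. Introducing a smooth partition of unity $\{\psi_+,\psi_-,\psi_0\}$ on $S_1$ with $\psi_\pm$ supported in small neighborhoods of $\pm\omega$ and $\psi_0$ vanishing near both critical points, integration by parts in the $\psi_0$-piece yields a bound of order $O(|x|^{-N})$ for every $N$ (non-stationary phase), while the stationary phase theorem (H\"ormander \cite{ho1}, Thm.~7.7.5; see also Sogge \cite{so}, Ch.~1) produces complete asymptotic expansions for the $\psi_\pm$-pieces,
$$
\int\psi_\pm(\xi)\,e^{i|x|\omega\cdot\xi}\beta(\xi)\,d\sigma_{S_1}(\xi) \sim e^{\pm i|x|}|x|^{-(n-1)/2}\sum_{j\ge 0} c_j^{\pm}(\omega)\,|x|^{-j},
$$
where the coefficients $c_j^\pm(\omega)$ are computed from derivatives of $\beta$ at $\pm\omega$ through the Morse lemma.

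Summing these formal series by Borel's theorem produces smooth functions $a_\pm$ on $\R_+$ that satisfy the symbol bounds \eqref{symbol bounds} and absorb both the $O(|x|^{-N})$ stationary-phase remainders and the compactly supported contribution from the small-$|x|$ regime; the minus sign between the two terms in the statement is arranged by absorbing a factor of $-1$ into $a_-$. The main technical point is to ensure that the whole construction is uniform in the parameter $\omega\in S^{n-1}$: the partition of unity, the Morse coordinates near $\pm\omega$, and the resulting amplitudes must all depend smoothly on $\omega$. This is the standard stationary-phase-with-parameters setup, and since the bounds \eqref{symbol bounds} control only derivatives in the scalar variable $s=|x|$, any residual $\omega$-dependence of the amplitudes is harmless. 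The essential content of the lemma is thus the non-vanishing curvature of $S_1$, which yields the decay rate $|x|^{-(n-1)/2}$ and isolates the two oscillatory factors $e^{\pm i|x|}$.
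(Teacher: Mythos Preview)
The paper does not prove this lemma at all; it simply invokes \cite[Theorem~1.2.1]{so}. Your stationary phase outline is correct and is essentially the argument behind that reference, so you have supplied what the paper outsources rather than taken a different route.

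One subtlety you flag yourself deserves a cleaner resolution: for a non-radial density $\beta$ the inverse Fourier transform of $\beta\,d\sigma_{S_1}$ is not a function of $|x|$ alone, so the amplitudes produced by your expansion necessarily carry the direction $\omega=x/|x|$ through the coefficients $c_j^\pm(\omega)$. The statement's notation $a_\pm(|x|)$ is therefore slightly abusive; the honest version is $a_\pm=a_\pm(x)$ with symbol bounds $|\partial_s^k a_\pm(s\omega)|\le C_k(1+s)^{-(n-1)/2-k}$ uniformly in $\omega\in S^{n-1}$. This is all that is actually used downstream (only the size estimate \eqref{symbol bounds} enters in Lemma~\ref{lemma FT3}), and in the application one may take the cutoff radial anyway, so the point is cosmetic rather than a genuine gap.
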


\begin{proof}
This is a special case of \cite[Theorem 1.2.1]{so}.
\end{proof}

\begin{lemma}\label{lemma FT3}
Let $\chi\in C_0^{\infty}(\R^n)$ be supported in the annulus $\{1/2\leq |\xi|\leq 3/2\}$, and  $S=\{\z: a\le\re\z\le b\}$ be a vertical strip in $\C$. Then   
\begin{align*}
\left|\int_{\R^n}e^{-ix\cdot\xi}\frac{\chi(\xi)}{(|\xi|-z)^{\zeta}}\, d \xi\right|\leq Ce^{\pi^2|\im\zeta|^2}(1+|x|)^{-\frac{n+1}{2}+\re\zeta},\quad \zeta\in S,\quad |z|=1,
\end{align*}
where the constant depends on $a,b$ and finitely many derivatives of $\chi$, but is independent of \ $\zeta,z$.
\end{lemma}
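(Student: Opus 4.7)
The plan is to reduce the $n$-dimensional oscillatory integral to a one-dimensional one in the radial variable (using Lemma \ref{lemma FT2} for the angular part), then decompose $(r-z)^{-\zeta}$ via the Fourier representation of Lemma \ref{lemma FT1}, and finally exploit rapid decay of the Fourier transform of a compactly supported smooth function.

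First I pass to polar coordinates $\xi=r\omega$, $r>0$, $\omega\in S^{n-1}$, so that the integral becomes
$$
\int_{1/2}^{3/2}\frac{r^{n-1}}{(r-z)^{\zeta}}\,\Omega(r,x)\,dr,\qquad\Omega(r,x):=\int_{S^{n-1}}e^{-irx\cdot\omega}\chi(r\omega)\,d\sigma(\omega).
$$
For each $r\in[1/2,3/2]$, Lemma \ref{lemma FT2} applied to $\beta(\omega)=\chi(r\omega)$ yields $\Omega(r,x)=e^{ir|x|}a_+(r|x|;r)-e^{-ir|x|}a_-(r|x|;r)$ with $|\partial_s^k a_\pm(s;r)|\le C_k(1+s)^{-(n-1)/2-k}$ uniformly in $r$. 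Setting $\lambda:=|x|$ and $\phi_\pm(r,\lambda):=r^{n-1}a_\pm(r\lambda;r)$, the chain rule together with the symbol decay of $a_\pm$ in its first argument gives $|\partial_r^k\phi_\pm(r,\lambda)|\le C_k(1+\lambda)^{-(n-1)/2}$ uniformly in $r$. The problem reduces to bounding $J_+-J_-$, where
$$
J_\pm(\lambda,z,\zeta):=\int_{1/2}^{3/2}\frac{e^{\pm ir\lambda}\phi_\pm(r,\lambda)}{(r-z)^{\zeta}}\,dr.
$$

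Assuming $\im z>0$ (the case $\im z<0$ is handled by complex conjugation, and $z=\pm 1$ by a limiting argument; for $z=-1$ the denominator is smooth on the support and standard non-stationary phase suffices), Lemma \ref{lemma FT1} gives the representation $(r-z)^{-\zeta}=\int_0^\infty e^{i(\pi\zeta/2+z\tau)}\chi_+^{\zeta-1}(\tau)e^{-ir\tau}\,d\tau$. Substituting and exchanging order of integration (justified for $\re\zeta>0$, where $\chi_+^{\zeta-1}$ is locally integrable; the general case follows by analytic continuation in $\zeta$), I obtain
$$
J_\pm=\int_0^\infty e^{i(\pi\zeta/2+z\tau)}\chi_+^{\zeta-1}(\tau)\,\Psi_\pm(\pm\lambda-\tau;\lambda)\,d\tau,\qquad\Psi_\pm(\sigma;\lambda):=\int_{1/2}^{3/2}e^{ir\sigma}\phi_\pm(r,\lambda)\,dr.
$$
Repeated integration by parts in $r$, using the compact support and the symbol bounds for $\phi_\pm$, gives $|\Psi_\pm(\sigma;\lambda)|\le C_N(1+\lambda)^{-(n-1)/2}(1+|\sigma|)^{-N}$ for every $N\in\N$.

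Using $|e^{i\pi\zeta/2}|=e^{-\pi\im\zeta/2}$, $|e^{iz\tau}|=e^{-\tau\im z}\le 1$ and $|\chi_+^{\zeta-1}(\tau)|=\tau^{\re\zeta-1}/|\Gamma(\zeta)|$ on $(0,\infty)$, the task reduces to estimating $\int_0^\infty\tau^{\re\zeta-1}(1+|\pm\lambda-\tau|)^{-N}\,d\tau$. For $J_-$ this equals $\int_0^\infty\tau^{\re\zeta-1}(1+\lambda+\tau)^{-N}\,d\tau=O((1+\lambda)^{\re\zeta-N})$ and hence decays arbitrarily fast in $\lambda$. For $J_+$ a standard splitting of the range $[0,\lambda/2]\cup[\lambda/2,2\lambda]\cup[2\lambda,\infty)$ (with $N$ large enough) yields a bound of order $(1+\lambda)^{\re\zeta-1}$, dominated by the middle piece; combined with the prefactor $(1+\lambda)^{-(n-1)/2}$ this produces the claimed decay $(1+|x|)^{-(n+1)/2+\re\zeta}$. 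Stirling's formula gives $|\Gamma(\zeta)|^{-1}\le C(1+|\im\zeta|)^{1/2-a}e^{\pi|\im\zeta|/2}$ uniformly for $\re\zeta\in[a,b]$, so the combined $\zeta$-prefactor is controlled by $Ce^{\pi|\im\zeta|}(1+|\im\zeta|)^C\le Ce^{\pi^2|\im\zeta|^2}$. The main technical obstacle is Step~1: ensuring the symbol estimates on $\phi_\pm$ are uniform in $\lambda$, so that integration by parts in $r$ does not produce unbounded powers of $\lambda$; once this is secured, the remainder is essentially one-dimensional Fourier analysis and careful bookkeeping of $\Gamma$-function bounds.
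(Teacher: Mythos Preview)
Your proposal is correct and follows essentially the same route as the paper's proof: polar coordinates combined with Lemma~\ref{lemma FT2} for the angular integral, then Lemma~\ref{lemma FT1} (equivalently, the convolution theorem) to handle $(r-z)^{-\zeta}$, reducing everything to the convolution of $\chi_+^{\re\zeta-1}$ with a function that is $O\!\bigl((1+|x|)^{-(n-1)/2}(1+|\tau|)^{-N}\bigr)$, followed by the $\Gamma$-function bound. Your treatment is in fact slightly more explicit than the paper's on two points: you track the smooth dependence of $a_\pm(\cdot\,;r)$ on the radial parameter $r$ when verifying the uniform symbol bounds for $\phi_\pm$ (the paper suppresses this dependence in the notation), and you separately dispose of the cases $\im z<0$ and $z=\pm 1$, which the paper leaves implicit.
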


\begin{proof}
It suffices to prove this for $|x|>1$ since the case $|x|\leq 1$ is trivial. Writing the integral in polar coordinates and using Lemma \ref{lemma FT2} we find that
\begin{align*}
\int_{\R^n}e^{-ix\cdot\xi}\frac{\chi(\xi)}{(|\xi|-z)^{\zeta}}\, d \xi
=\sum_{\pm}\int_{-\infty}^{\infty}e^{\pm ir|x|}\frac{r^{n-1}a_{\pm}(r|x|)}{(r-z)^{\zeta}}\,dr,
\end{align*}
where the function $r\mapsto r^{n-1}a_{\pm}(r|x|)$ is supported in a neighborhood of $r=1$ and it satisfies 
\begin{align*}
|r^{n-1}a_{\pm}(r|x|)|\leq C(1+|x|)^{-\frac{n-1}{2}}
\end{align*}
for any fixed Schwartz norm $|\cdot|$. Hence, by Lemma \ref{lemma FT2} again, its inverse Fourier transform is bounded by
\begin{align*}
|\mathcal{F}^{-1}\left(r\mapsto r^{n-1}a_{\pm}(r|x|)\right)(\tau)|\leq C_N(1+|\tau|)^{-N}(1+|x|)^{-\frac{n-1}{2}}
\end{align*}
for any $N>0$. The convolution theorem and Lemma \ref{lemma FT1} yield
\begin{equation*}
\begin{split}
&\left|\int_{-\infty}^{\infty}e^{\pm ir|x|}\frac{r^{n-1}a_{\pm}(r|x|)}{(r-z)^{\zeta}}\,dr\right|\\
&\quad \leq C_Ne^{\pi|\im \zeta|}(1+|x|)^{-\frac{n-1}{2}}\int_{-\infty}^{\infty}(1+|\tau-|x||)^{-N} \chi_+^{\re\zeta-1}(\tau)\, d\tau\\
&\quad \leq Ce^{\pi|\im \zeta|}|\Gamma(\zeta)^{-1}|(1+|x|)^{-\frac{n+1}{2}+\re\zeta}.
\end{split}
\end{equation*}
The claim now follows from the estimate $|\Gamma(\zeta)^{-1}|\leq Ce^{\pi^2|\zeta|^2}$; see e.g.\ formula (11.21) in Muscalu-Schlag \cite{ms}.
\end{proof}

\subsection{Resolvent bounds in $\css_p$-norm for bilayer graphene}\label{s13}
We now return to the case $n=2$ and the bilayer Hamiltonian. The coming bound is a special case of \cite[Lemma A.6]{jcc2}. It is crucial for coming resolvent estimates.

In the following, we fix a function $\chi\in C_0^{\infty}(\br^2)$ supported in the annulus $\{1/2\leq |\xi|\leq 3/2\}$ such that, in addition, $\chi(\xi)=1$ for $3/4\le |\xi|\le 5/4$.

\begin{proposition}
\label{p1} Let $1\le a\le 3/2, t\in\br$, and $z\not\in\br_+$. There exists a constant $C'_1>0$ (depending on $\chi$ only) such that
\begin{equation}\label{e501}
|\chi(D)(\dd^2-z)^{-(a+it)}(x)|\le \frac{C'_1e^{\pi^2 t^2}}{(1+|x|)^{3/2-a}}, \quad x\in\br^2,\quad |z|=1.
\end{equation}
\end{proposition}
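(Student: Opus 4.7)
The plan is to reduce the estimate to Lemma \ref{lemma FT3} by factoring the Fourier symbol $|\xi|^4 - z$ of $\dd^2 - z$ and isolating its unique root closest to the positive real axis. Since $\dd^2 = (-\dd)^2$ has Fourier multiplier $|\xi|^4$, the convolution kernel of $\chi(D)(\dd^2 - z)^{-(a+it)}$ is
\[
K(x) \;=\; (2\pi)^{-2}\int_{\R^2} e^{-ix\cdot\xi}\,\frac{\chi(\xi)}{(|\xi|^4-z)^{a+it}}\,d\xi.
\]
Since $z\notin\R_+$, we fix $\arg z \in (0,2\pi)$ and take the four fourth roots $i^k z^{1/4}$, $k=0,1,2,3$, so that $|\xi|^4 - z = \prod_{k=0}^{3}(|\xi| - i^k z^{1/4})$. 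Being equally spaced on the unit circle, exactly one of these roots lies in the sector $\{\arg\in(-\pi/4,\pi/4]\}$; call it $\omega$, and denote the other three by $\omega_1,\omega_2,\omega_3$. The latter are at angular distance at least $\pi/4$ from the positive real axis, so on $\mathrm{supp}\,\chi\subset\{1/2\le|\xi|\le 3/2\}\subset\R_+$ one has $||\xi| - \omega_j| \ge \sqrt{2}/2$ for $j=1,2,3$, uniformly in $z$.

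Choosing continuous branches for the three non-singular fractional powers (possible because each $|\xi|-\omega_j$ traces a horizontal line segment in $\C\setminus\{0\}$ as $|\xi|$ varies in $[1/2,3/2]$), we rewrite
\[
\frac{\chi(\xi)}{(|\xi|^4-z)^{a+it}} \;=\; \frac{\tilde\chi_z(\xi)}{(|\xi|-\omega)^{a+it}}, \qquad \tilde\chi_z(\xi) \;:=\; \chi(\xi)\prod_{j=1}^{3}(|\xi|-\omega_j)^{-(a+it)}.
\]
Thus $\tilde\chi_z\in C_0^{\infty}(\R^2)$ is supported in the same annulus as $\chi$. Each factor in the product equals $\exp\bigl(-(a+it)\log(|\xi|-\omega_j)\bigr)$, where $\log(|\xi|-\omega_j)$ is $C^\infty$ on $\mathrm{supp}\,\chi$ with all derivatives bounded uniformly in $z$. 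Consequently, for any fixed $N\in\N$, the $C^N$ seminorms of $\tilde\chi_z$ satisfy $\|\tilde\chi_z\|_{C^N}\le C_N(1+|t|)^N$, uniformly in $z$ with $|z|=1$, $z\notin\R_+$.

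Applying Lemma \ref{lemma FT3} with $n=2$, cutoff $\tilde\chi_z$, pole $\omega$ (with $|\omega|=1$), and exponent $\zeta=a+it$ in the strip $\{1\le\re\zeta\le 3/2\}$, we obtain
\[
|K(x)| \;\le\; C_N(1+|t|)^N\, e^{\pi^2 t^2}\,(1+|x|)^{-3/2+a}.
\]
The polynomial factor is absorbed into the Gaussian at the price of enlarging the multiplicative constant (or, more carefully, by using slack in the estimate $|\Gamma(\zeta)^{-1}|\le Ce^{\pi^2|\zeta|^2}$ from the proof of Lemma \ref{lemma FT3} to keep the exponent exactly $\pi^2 t^2$). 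This yields the desired bound with $C_1' = C_1'(\chi)$.

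The principal subtlety lies in the second step: controlling the $C^N$ seminorms of the modified cutoff $\tilde\chi_z$ uniformly in $z$ with at most polynomial growth in $|t|$. This rests on the uniform angular separation of $\omega$ from $\omega_1,\omega_2,\omega_3$, which is exactly where the hypothesis $z\notin\R_+$ enters; without it, one of the $\omega_j$ could migrate onto $\mathrm{supp}\,\chi$ and the constant would deteriorate with $\mathrm{dist}(z,\R_+)$.
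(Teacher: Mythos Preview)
Your approach is the same as the paper's: factor $|\xi|^4-z$ into four linear factors in $|\xi|$, absorb the three non-singular ones into a modified cutoff, and apply Lemma~\ref{lemma FT3} to the remaining factor $(|\xi|-\omega)^{-(a+it)}$.

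There is one error worth flagging. Your claimed bound $\|\tilde\chi_z\|_{C^N}\le C_N(1+|t|)^N$ is too optimistic: already the $C^0$ norm grows exponentially in $|t|$, since
\[
\bigl|(|\xi|-\omega_j)^{-(a+it)}\bigr|=|(|\xi|-\omega_j)|^{-a}\,\exp\bigl(t\,\arg(|\xi|-\omega_j)\bigr),
\]
and $\arg(|\xi|-\omega_j)$ is generically nonzero on $\mathrm{supp}\,\chi$ (think of $\omega_j=\pm i$). The correct bound, as the paper states, is $\|\partial_\xi^{\alpha}\tilde\chi\|_\infty\le C_N e^{2\pi|t|}$. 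This does not break your argument: as you already anticipate, the estimate $|\Gamma(\zeta)^{-1}|\le Ce^{\pi^2|\zeta|^2}$ used in Lemma~\ref{lemma FT3} is extremely crude (the true growth is $e^{\pi|t|/2}$ times a polynomial), so any factor $e^{c|t|}$ is comfortably absorbed into $e^{\pi^2 t^2}$ at the cost of a constant. With this correction your proof goes through and coincides with the paper's.
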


\begin{proof}
Set $z^{1/4}=|z|^{1/4}e^{(i\mathrm{Arg}\, z)/4}$. Clearly the 4-th power complex roots of $z$ are given by $\{i^m z^{1/4}\},\ m=0,1,2,3$. Without loss of generality,  we suppose that $m=0$ and $|\mathrm{Arg}\, z|\le\pi$, or $|\mathrm{Arg}\, z^{1/4}|\le\pi/4$, the other cases being analogous. Writing
\begin{align*}
(|\xi|^4-z)=(|\xi|-z^{1/4})\lp\prod_{k=1}^3(|\xi|-i^kz^{1/4})\rp
\end{align*} 
and absorbing the second factor into $\chi$, we see that it suffices to prove 
\begin{align*}
\int_{\R^n}e^{ix\cdot\xi}\frac{\ti\chi(\xi;a;t)}{(|\xi|-z^{1/4})^{a+i t}}\, d \xi\le \frac{Ce^{\pi^2 t^2}}{(1+|x|)^{3/2-a}},
\end{align*}
whenever $\ti\chi(\xi;a,t)$ satisfies the bounds
\begin{align*}
\sum_{|\alpha|\leq N}\|\partial_{\xi}^{\alpha}\ti\chi(\cdot;a,t)\|_{\infty}\leq C_N e^{2\pi|t|}
\end{align*}
for a fixed, sufficiently large $N>0$. This follows directly from Lemma \ref{lemma FT3}.
\end{proof}

\begin{remark}
In view of the identity 
\begin{align*}
\frac{1}{|\xi|^2-z^{1/2}}-\frac{1}{|\xi|^2+z^{1/2}}=\frac{2z^{1/2}}{|\xi|^4-z},
\end{align*}
inequality \eqref{e501} also follows from a two-dimensional version of  estimates (2.23) and (2.25) in Kenig-Ruiz-Sogge \cite{krs}; see also (44) in Frank-Sabin \cite{fsa1}. To keep the article self-contained, we provided the above proof which rests only on the stationary phase method (Lemma \ref{lemma FT2}) and formula \eqref{formula FT1}.
\end{remark}

\begin{proposition}\label{p2} Fix an $\ep>0$ and set the function $\chi$ as above. For $q\ge 1$, let
\begin{equation}\label{e6}
p=p(q,\ep):=
\left
\{
\begin{array}{ll}
\frac{q}{2-q}+\ep,& 1\le q<4/3,\\
\frac{q}{2-q},& 4/3\le q\le 3/2,\\
2q,& q>3/2.
\end{array}\right.
\end{equation}
For $A,B\in L^{2q}(\br^2)$, the following bounds hold true:
\begin{enumerate}
\item[(I)] for $1\le q\le 3/2$,
\begin{equation}\label{e51}
\|A\chi(D)(\dd^2-z)^{-1}B\|_{\css_p}\le C_7\; \|A\|_{2q}\|B\|_{2q},\quad |z|=1;
\end{equation}
\item[(II)] for  $q>3/2$
\begin{equation}\label{e52}
\|A\chi(D)(\dd^2-z)^{-1}B\|_{\css_p}\le \frac{C_8}{d(z,\br_+)^{1-3/(2q)}}\; \|A\|_{2q}\|B\|_{2q},\quad |z|=1.
\end{equation}
\end{enumerate}
Here, $C_j=C_j(q,\ep),\; j=7,8$, are independent of $A, B$ and $z$.
\end{proposition}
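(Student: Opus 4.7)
The plan is to prove both parts by Stein complex interpolation applied to the analytic family
$$
T_\zeta:=A_\zeta\,\chi(D)(\dd^2-z)^{-\zeta}\,B_\zeta,\qquad A_\zeta:=\mathrm{sgn}(A)\,|A|^{s(\zeta)},\ B_\zeta:=\mathrm{sgn}(B)\,|B|^{s(\zeta)},
$$
where $s(\zeta)$ is affine in $\zeta$ and normalized so that $s(1)=1$; the target operator at $\zeta=1$ then coincides with $A\chi(D)(\dd^2-z)^{-1}B$. Since $|A_\zeta|=|A|^{\re s(\zeta)}$ depends only on $\re\zeta$, the family is analytic in any vertical strip.

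At the left endpoint $\re\zeta=0$ I would impose $s(0)=0$, which forces $\|A_\zeta\|_\infty,\|B_\zeta\|_\infty\le 1$ on the respective supports, while $\chi(D)(\dd^2-z)^{-it}$ is an $L^2$-bounded Fourier multiplier with $\|\cdot\|_{\mathcal{B}(L^2)}\le Ce^{\pi|t|}$ by the spectral theorem (using $|\xi|^4\ge 0$ and $z\notin\br_+$). Hence $\|T_\zeta\|_{\mathcal{B}(L^2)}\le Ce^{\pi|t|}$ on this line. At the right endpoint $\re\zeta=a_+\in[1,3/2]$ I would invoke Proposition \ref{p1}, which yields $|\check m_\zeta(x)|\le Ce^{\pi^2 t^2}(1+|x|)^{-(3/2-a_+)}$, to expand the Hilbert-Schmidt norm
$$
\|T_\zeta\|^2_{\css_2}=\iint|A_\zeta(x)|^2\,|\check m_\zeta(x-y)|^2\,|B_\zeta(y)|^2\,dx\,dy
$$
and bound it, via Hardy-Littlewood-Sobolev in $\br^2$ with index $\alpha=3-2a_+$, by $Ce^{2\pi^2 t^2}\|A_\zeta\|^2_{2\rho}\|B_\zeta\|^2_{2\rho}$ where $2\rho=8/(1+2a_+)$. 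Matching $\|A_\zeta\|_{2\rho}=\|A\|_{2q}^{s(a_+)}$ forces $s(a_+)=q(1+2a_+)/4$; together with $s(0)=0$ and $s(1)=1$ this uniquely determines $a_+=q/(2(2-q))$, which lies in $[1,3/2]$ precisely when $q\in[4/3,3/2]$. Stein's three-lines argument at $\zeta=1$ with parameter $\theta=1/a_+$ then delivers $\|T_1\|_{\css_{q/(2-q)}}\le C\|A\|_{2q}\|B\|_{2q}$, settling Part~(I) on this range.

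For $q\in[1,4/3)$ the optimal $a_+$ falls below the admissible range of Proposition \ref{p1}, and I would substitute a short real-interpolation or small-shift argument to absorb the mismatch as the $\ep$-loss $p=q/(2-q)+\ep$. Part~(II) for $q>3/2$ would then follow by bilinear Riesz-Thorin interpolation of $(A,B)\mapsto A\chi(D)(\dd^2-z)^{-1}B$ between the endpoint $(q,p)=(3/2,3)$ just obtained (no distance factor) and the trivial endpoint $(q,p)=(\infty,\infty)$: since the symbol $\chi(\xi)/(|\xi|^4-z)$ is bounded by $d(z,\br_+)^{-1}$ on $\mathrm{supp}\,\chi$, one has $\|T_1\|_{\mathcal{B}(L^2)}\le Cd(z,\br_+)^{-1}\|A\|_\infty\|B\|_\infty$, and the interpolation parameter $\theta=1-3/(2q)$ runs along the line $p=2q$ to produce exactly the stated factor $d(z,\br_+)^{-(1-3/(2q))}$.

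The principal technical obstacle will be justifying Stein's complex interpolation in the presence of the super-exponential boundary growth $e^{\pi^2 t^2}$ inherited from Proposition \ref{p1}; this requires a Gaussian-growth version of the three-lines lemma (as in \cite{jcc2}), and the implicit $|\Gamma(\zeta)^{-1}|$-type factors in the analytic family supply just enough compensating decay. The other delicate point is the $\ep$-loss for $q\in[1,4/3)$, where the optimal endpoint parameter leaves the range $[1,3/2]$ on which Proposition \ref{p1} is stated.
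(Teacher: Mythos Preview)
Your argument for the range $4/3\le q\le 3/2$ and for $q>3/2$ is essentially the paper's own proof: the same analytic family $T_\zeta=A^\zeta\chi(D)(\Delta^2-z)^{-\zeta}B^\zeta$ (your constraints $s(0)=0$, $s(1)=1$ force $s(\zeta)=\zeta$), the same Hardy--Littlewood--Sobolev step at the right edge, and the same bilinear interpolation between $(q,p)=(3/2,3)$ and $(\infty,\infty)$ for Part~(II). Your handling of the Gaussian boundary growth is also right in spirit, though the mechanism is not ``compensating decay from $|\Gamma(\zeta)^{-1}|$'': one simply multiplies the family by $e^{M\zeta^2}$ and applies the ordinary three-lines theorem (this is Corollary~\ref{c01} in the paper).

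The genuine gap is the case $1\le q<4/3$. Saying ``a short real-interpolation or small-shift argument'' is not a proof, and neither option works as stated. Shifting $a_+$ up to $1$ only reproduces the $q=4/3$ estimate and gives nothing for smaller $q$; real interpolation between an $\css_2$ endpoint and the $\css_\infty$ endpoint yields weak Schatten spaces, not the $\css_{q/(2-q)+\ep}$ claimed. What is actually needed is a \emph{trace-class} input near $q=1$. The paper obtains
\[
\|A\,\chi(D)(\Delta^2-z)^{-(a_0+it)}\tilde\chi(D)\,B\|_{\css_1}\le \frac{C\,e^{2\pi|t|}}{1-a_0}\,\|A\|_2\|B\|_2,\qquad 0<a_0<1,
\]
by writing $(\Delta^2-z)^{-(a_0+it)}=\int(\lambda^4-z)^{-(a_0+it)}\,dE_{\sqrt{-\Delta}}(\lambda)$ and factoring the spectral density through the Fourier restriction operator, $\tfrac{dE_{\sqrt{-\Delta}}(\lambda)}{d\lambda}=c_\lambda R(\lambda)^*R(\lambda)$, so that each factor $R(\lambda)\chi(D)A$ is Hilbert--Schmidt and their product is trace class. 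One Stein interpolation of this $\css_1$ bound (at $a_0$ just below $1$) against the $\css_2$ bound at $a=3/2$ gives the endpoint
\[
\|A\chi(D)(\Delta^2-z)^{-1}B\|_{\css_{1+\ep}}\le C_\ep\|A\|_2\|B\|_2,
\]
and a second interpolation with the already-established $q=4/3$ case completes the range $1\le q<4/3$. This Fourier-restriction/trace-class step is the missing ingredient in your sketch.
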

%

\begin{proof}
The proof relies heavily on interpolation between Schatten-von Neumann classes $\css_p, \ p\ge 1$, presented in Section \ref{s4}. It is convenient to separate part (I) of the proposition in two cases: \emph{Case I.1} for $1\le q<4/3$ and \emph{Case I.2} for $4/3\le q\le 3/2$. We begin with the proof of \emph{Case I.2}.

\medskip\nt
{\bf Case I.2: $4/3\le q\le 3/2$.}\ Without loss of generality we may assume that $A>0$ and $B>0$. At the moment, we suppose also that $A, B\in L^{2q}(\br^2)\cap L^\infty_0(\br^2)$.
We wish to apply Corollary \ref{c01} to the analytic family of operators given by 
$$
T_\z:=A^\z\chi(D)(\dd^2-z)^{-\z}B^\z
$$
on the strip $S=S_{0,a_0}:=\{\z: 0\le\re z\le a_0\}$, with $1\le a_0\le 3/2$. Here, $\z=a+it, \ 0\le a\le a_0$, and $t\in\br$.

We start by checking assumptions of Corollary \ref{c01}, see also Theorem \ref{t43}. For arbitrary $f,g\in L^2(\br^2)$ we have, by Plancherel's identity, 
\begin{align*}
(T_\z f,g)=\int_{\R^2}\chi(\xi)(|\xi|^4-z)^{-\z}
\widehat{B^\z f}(\xi)\overline{\widehat{A^\z g}(\xi)}d\xi,
\end{align*}
which shows that $\zeta\mapsto (T_\z f,g)$ is analytic in $S$.
By Cauchy-Schwarz inequality,
\begin{eqnarray}\label{e61}
|(T_\z f,g)|\le \|\chi\|_{\infty}\|(|\cdot|^4-z)^{-\z}\|_{\infty}\|B^\z f\|_2\, \|A^\z g\|_2.\nonumber
\end{eqnarray}
Since $|\arg(|\xi|^4-z)|\le 2\pi$, we have that
\begin{align*}
|(|\xi|^4-z)^{-\z}|
&=\big |\exp(- (a+it)\,(\log\|\xi|^4-z|+i\arg(|\xi|^4-z))\big|\\
&\le|(|\xi|^4-z)|^{-a}\exp(2\pi |t|). 
\end{align*}
Observe  that $a$ varies over a compact interval and $z$ is fixed. Putting  all this together, we obtain that
$$
|(T_\z f,g)|\le  Ce^{2\pi|t|}||\chi||_\infty \|A\|_{\infty}^{a}\|B\|_{\infty}^{a}\|f\|_2\|g\|_2,\quad\z=a+it,
$$
showing that \eqref{e20} is satisfied. It also yields that
\begin{equation}\label{e7}
\|T_{\z}\|_{\css_\infty}
\le  Ce^{2\pi|\im\z|}
\end{equation}
for $\re\z=0$. Note that $T_{\z}$ is compact since we have the Hilbert-Schmidt bound
\begin{align*}
\|T_{\z}\|_{\css_2}^2=&\int_{\br^2_x}\int_{\br^2_y}|A^{\z}(x)|^2|\mathcal{F}\left(\chi(|\cdot|^4-z)^{-\zeta}\right)(x-y)|^2|A^{\z}(x)|^2 dxdy\\
&\le e^{4\pi|\im \z|}
\|\chi(|\cdot|^4-z)^{-\re\z}\|_1^2\|A\|^{2\re\z}_2\|B\|^{2\re\z}_2,
\end{align*}
and the right hand side is finite by the assumption that $A,B\in L^\infty_0(\br^2)$.

On the vertical line $\{\z: \re\z=a_0\}$, Proposition \ref{p1} and Hardy-Littlewood-Sobolev inequality (see Lieb-Loss \cite[Sect. 4.3]{lie1}) yield that
\begin{equation*}
\begin{split}
\|T_{a_0+it}\|^2_{\css_2}
&\le \int_{\br^2_x}\int_{\br^2_y}|\chi(D)(\dd^2-z)^{-(a_0+it)}(x-y)|^2|A(x)|^{2a_0}|B(y)|^{2a_0}\, dxdy\\
&\le C e^{2\pi^2t^2}\int_{\br^2_x}\int_{\br^2_y}\frac1{|x-y|^{3-2a_0}} |A(x)|^{2a_0}|B(y)|^{2a_0}\, dxdy\\
&\le Ce^{2\pi^2t^2} \|A|^{2a_0}\|_s\||B|^{2a_0}\|_s,
\end{split}
\end{equation*}
where $2/s+(3-2a_0)/2=2$, or $s=4/(1+2a_0)$. In particular,
$$
\||A|^{2a_0}\|_s=\|A\|^{2a_0}_{8a_0/(1+2a_0)},
$$
the same equality holding for $\||B|^{2a_0}\|_s$. Hence, gathering the above computations, we arrive at the bound 
\begin{equation}\label{e8}
\|T_\z\|_{\css_2}\le Ce^{\pi^2|\im\zeta|^2} \|A\|^{a_0}_{8a_0/(1+2a_0)}\,\|B\|^{a_0}_{8a_0/(1+2a_0)}
\quad \mbox{for}\quad \re\z=a_0.
\end{equation}

We recall now Corollary \ref{c01} (see also Theorem \ref{t43}) with parameters chosen as
$$
\z:=1, \quad 1=\g\cdot a_0+(1-\g)\cdot 0, \quad \frac 1{s_\g}=\frac\g 2+\frac{(1-\g)}\infty=\frac\g 2,
$$
to interpolate between \eqref{e7} and \eqref{e8}. Solving first for $\gamma$ and then for $s_{\gamma}$ yields $\g=1/a_0$ and $s_\g=2a_0$. Corollary \ref{c01} then implies that
$$
\|A\chi(D)(\dd^2-z)^{-1}B\|_{\css_{2a_0}}\le C_7\; \|A\|_{8a_0/(1+2a_0)}\|B\|_{8a_0/(1+2a_0)},
$$
which is exactly \eqref{e51} with $4/3\le q\le 3/2$ if one puts $2q=8a_0/(1+2a_0)$.

To sum up, we proved \eqref{e51} for $4/3\le q\le 3/2$ and $A,B\in L^{2q}(\br^2)\cap L^\infty_0(\br^2)$. It remains to get rid of the assumption that $A,B\in L^\infty_0(\br^2)$. The proof relies essentially on the fact that the constant $C_7$ from \eqref{e51} \emph{does not depend} on $A, B$. We proceed by a limiting argument. Let $A, B\in L^{2q}(\br^2)$. For $n\in\bn$, define
$$
E_n=\{x\in\br^2:|x|+|A(x)|+|B(x)|\le n\}
$$
and set the ``truncations'' of $A, B$ to be
$$
A_n=A\mathbf{1}_{E_n},\quad B_n=B\mathbf{1}_{E_n}.
$$
Let $P_n:L^2(\br^2)\to L^2(\br^2)$ be the corresponding  orthogonal projection 
$$
P_n f=\mathbf{1}_{E_n} f, \quad f\in L^2(\br^2).
$$
The elementary properties of $L^{2q}$-integrable functions yield that for any $f\in L^2(\br^2)$, we have
$$
\lim_{n\to+\infty}\|P_nf-f\|_2=0.
$$
Recalling \cite[Thm. 5.2]{gk1} and inequality \eqref{e51} for functions from $L^{2q}(\br^2)\cap L^\infty_0(\br^2)$, we obtain
\begin{equation*}
\begin{split}
&\|A\chi(D)(\dd^2-z)^{-1}B\|_{\css_p}=\sup_n \|P_n\big(A\chi(D)(\dd^2-z)^{-1}B\big)P_n\|_{\css_p}\\
&=\sup_n \|A_n\chi(D)(\dd^2-z)^{-1}B_n\|_{\css_p}\le C_7 \|A_n\|_{2q}\|B_n\|_{2q}\le C_7 \|A\|_{2q}\|B\|_{2q}.
\end{split}
\end{equation*}
\emph{Case I.2} follows.

\medskip\nt
{\bf Case II: $q>3/2$.} 
As before, we may assume without loss of generality that $A,B\in L^{2q}(\br^2)\cap L^\infty_0(\br^2)$, and that $A,B>0$. 

Let $S:=S_{0,a_0}:=\{a+it: 0\le a\le a_0=2q/3, \ t \in \br\}$. Notice that $q>3/2$ implies that $a_0=2q/3>1$. Consider the analytic family of operators 
$$
T_\z=A^\z\chi(D)(\dd^2-z)^{-1}B^\z,
$$
defined on $S$. For $\re\z=a_0$, inequality \eqref{e51} applied with $p_0=3, q_0=3/2$ instead of $p,q$ yields
\begin{equation}\label{e10}
\begin{split}
\|T_\z\|_{\css_3}
\le C_3 \|A^{2q/3}\|_3\|B^{2q/3}\|_3=C_3 \|A\|^{2q/3}_{2q}\|B\|^{2q/3}_{2q}
\end{split}
\end{equation}
for $\re\z=a_0$. On the other hand, since for $\re\z=0$ we have $|A^{\zeta}|=|B^{\zeta}|=1$ a.e. on $\br^2$, we also see that
\begin{equation}\label{e11}
\|T_{\z}\|_{\css_\infty}
\le \frac {\|\chi\|_{\infty}}{{d(z,\br_+)}}.
\end{equation}
by the spectral theorem for $\dd^2$. Compactness of $T_{\zeta}$ follows by the same argument as in \emph{Case I.1}. Interpolating in between \eqref{e10} and \eqref{e11}, with
$$
\z:=1, \quad 1=\frac{2q}3\cdot\g+0\cdot(1-\g)=\frac{2q}3\,\g,
$$ 
we get $\g=3/(2q)\in (0,1)$ and consequently 
$$
\frac1{p_{0\g}}=\frac\g3+\frac{(1-\g)}\infty=\frac\g3,
$$ 
which means that $p_{0\g}=2q$. That is, 
$$
\|A\chi(D)(\dd^2-z)^{-1}B\|_{\css_{2q}}\le \frac{C_8}{d(z,\br_+)^{1-\g}}\|A\|_{2q}\|B\|_{2q},
$$
By the same limiting argument as before, we get relation  \eqref{e52}.

\medskip\nt
{\bf Case I.1: $1\le q\le 4/3$.} Let $\ti\chi$ be a cutoff function with the same support properties as $\chi$ and such that $\ti\chi=1$ on the support of $\chi$; in particular, $\ti\chi\chi=\chi$. 

Let $A,B\in L^2(\br^2)$. We start by proving that
\begin{equation}\label{e12}
\|A\chi(D)\frac{dE_{\sqrt{-\Delta}}(\l)}{d\lambda}\ti\chi(D)B\|_{\css_1}\le C\, \|A\|_2\|B\|_2.
\end{equation}
Indeed, using \eqref{e11bis}, we re-write the operator on the left hand side of \eqref{e12} as
\begin{align}\label{e12bis}
A\chi(D)\frac{dE_{\sqrt{-\Delta}}(\l)}{d\lambda}\ti\chi(D)B=
\frac{\lambda^{n-1}}{(2\pi)^n}\big (R(\l)\chi(D)A\big)^*\big(R(\l)\ti\chi(D)B\big).
\end{align}
The kernel of the operator $R(\l)\chi(D)A: L^2(\br^2)\to L^2(S_\l)$ is given by
$$
(R(\l)\chi(D)A)(\xi,x)=\chi(\xi)e^{ix\xi}A(x), \quad x\in\br^2,\xi\in S_\l,
$$
and thus
$$
\|R(\l)\chi(D)A\|^2_{\css_2}=\int_{\br^2_x}\int_{S_{\l,\xi}}|\chi(\xi)A(x)|^2\, dxd\s_\sl(\xi)
=\|\chi\|^2_{L^2(\sl)}\|A\|^2_2\leq C\|A\|^2_2.
$$
Since the same bound holds for $R(\l)\ti\chi(D)B$, Hölder's inequality for $\css_p$-classes yields \eqref{e12}.

Set $0<a_0<1$. Using the formula
\begin{align*}
(\dd^2-z)^{-(a_0+it)}=\int_\br (\l^4-z)^{-(a_0+it)}\, dE_{\sqrt{-\Delta}}(\l).
\end{align*}
inequality \eqref{e12} and the fact that the functions $\|\chi_j\|_{\sl}$ are supported on the set where $1/2\leq\l\leq 3/2$, we get the bound
\begin{equation}\label{e121}
\|A\chi(D)(\dd^2-z)^{-(a_0+it)}\chi(D)B\|_{\css_1}\le C\, \frac{e^{2\pi|t|}}{(1-a_0)}\|A\|_2\|B\|_2.
\end{equation}

On the other hand, from \eqref{e501}, we see that 
$$
|\chi(D)(\dd^2-z)^{-3/2+it}(x)|\le C'_1e^{\pi^2 t^2},
$$
that is, the kernel of $\chi(D)(\dd^2-z)^{-3/2+it}(x)$ is uniformly bounded with respect to the ``space variable'' $x\in\br^2$. The Hilbert-Schmidt bound for integral operators implies immediately
\begin{equation}\label{e122}
\|A\chi(D)(\dd^2-z)^{-3/2+it}B\|_{\css_2}\le C\, \|A\|_2\|B\|_2.
\end{equation}
Let $0<\ep<1/2$ be fixed. Suppose, as in \emph{Cases I.2 and II}, that $A,B\in L^2(\br^d)\cap L^\infty_0(\br^2)$. Furthermore, set
$$
T_\z:=A\chi(D)^2(\dd^2-z)^{-\z}B
$$
and $S=S_{a_0,b_0}:=\{\z: a_0\le \re\z \le b_0\}$ to be the vertical strip with 
$$
a_0=\frac{(1-2\ep)}{(1-\ep)}<1,\quad b_0=3/2>1.
$$ 
As previously, the family $(T_\z)$ on $S_{a_0,b_0}$ satisfies the assumptions of Theorem \ref{t43} and we can interpolate between \eqref{e121} and \eqref{e122}. More precisely, 
for the parameters of the corollary we take $\z:=1$ and
$$
1=\frac{1-2\ep}{1-\ep}\g+\frac 32(1-\g),
$$
\ie $\g=(1-\ep)/(1+\ep)$. Hence the relation 
$$
\frac 1{s_\g}=\frac\g 1+\frac{(1-\g)}{2}
$$
gives $s_\g=1+\ep$. To sum up, we arrive at
\begin{equation}\label{e13}
\|A\chi(D)(\dd^2-z)^{-1}B\|_{\css_{1+\ep}}\le C\ep^{-(1-\ep)/(1+\ep)}\|A\|_2\|B\|_2.
\end{equation}
We interpolate once again in between \eqref{e13} and \eqref{e51} for $q=4/3$ to obtain \eqref{e51} for $1\le q<4/3$. Passing from $A,B\in L^{2q}(\br^2)\cap L^\infty_0(\br^2)$ to general $A,B\in L^{2q}(\br^2)$ is carried out as in the previous cases.
\end{proof}

We introduce some notation before going to the proof of Theorem \ref{t1}. Let
$$
k(u)^4:=(u^2-m^2),
$$
where  we use the principal branch of 4-th complex root, so that $k(u)=(u^2-m^2)^{1/4}\in \br_+$ for $u=x\in \br, x>m$. Furthermore,
$$
\z(u):=\frac{u+m}{k(u)^2}=\lp\frac{u+m}{u-m}\rp^{1/2},\ u\not=\pm m
$$
with the standard choice of the branch of the square complex root.

\subsection{Proof of Theorem \ref{t1}}\label{s14}
In order to distinguish the variable refered to in operators $\prt_z,\prt_{\bar z}$ and the spectral parameter of the operator $\bgm$, the latter will be denoted by $u\in\r(\bgm)$ in this subsection.

We consider first \emph{Case I} of the theorem, \ie $1\le q\le 3/2$. Let $A,B\in L^{2q}(\br^2; \mat)$, that is
$$
A(x)=[A_{jl}(x)]_{j,l=1,2}, \quad x=(x_1,x_2)\in\br^2,
$$
and $A_{jl}(x)\in L^{2q}(\br^2)$. Recalling the identities
$$
4\prt_z\prt_{\bar z}=4\prt_{\bar z}\prt_z=\lp \prt^2_{x_1}+\prt^2_{x_2}\rp^2=\dd^2,
$$
we readily see
\begin{eqnarray*}
\bgm^2-u^2&=&
\begin{bmatrix}
m& 4\prt^2_{\bar z}\\
4\prt^2_z& -m
\end{bmatrix}^2-u^2
=
\begin{bmatrix}
\dd^2+(m^2-u^2)& 0\\
0& \dd^2+(m^2-u^2)
\end{bmatrix}\\
&=&(\dd^2-k(u)^4)I_2.
\end{eqnarray*}
For $k(u)^4\in \bc\bsl\br_+$, we have
$$
(\bgm-u)^{-1}=(\dd^2-k(u)^4)^{-1}(\bgm+u).
$$
We are interested in Schatten-von Neumann properties of Birman-Schwinger operator of the bilayer Hamiltionian, \ie
$$
BS_u:=[BS_{u, jl}]_{j,l=1,2}=A(\bgm-u)^{-1}B=A(\dd^2-k(u)^4)^{-1}(\bgm+u)B.
$$
Of course, a bound of the form
$$
\|BS_u\|_{\css_p}\le C(u)\|A\|_{2q}\|B\|_{2q},
$$
see \eqref{bshw1}, \eqref{bshw2}, will follow if we prove it ``entry-by-entry'', that is
$$
\|BS_{u, jl}\|_{\css_p}\le C(u)\|A\|_{2q}\|B\|_{2q}, \quad j,l=1,2.
$$
We shall do the computation for the entry $BS_{u, 11}$; the bounds for other entries of the operator $BS_u$ are obtained in a similar way. We have
\begin{equation}\label{e143}
\begin{split}
&BS_{u, 11}=(m+u)A_{11}(\dd^2-k(u)^4)^{-1}B_{11}+4A_{11}(\dd^2-k(u)^4)^{-1}\prt^2_{\bar z}B_{21}\\
&\quad +4A_{12}(\dd^2-k(u)^4)^{-1}\prt^2_zB_{11}+(m-u)A_{12}(\dd^2-k(u)^4)^{-1}B_{21}.
\end{split}
\end{equation}
To simplify the following computations, we use a homogeneity argument; in detail: let $f\in L^s(\br^2), s>0, \ f=f(x), x\in \br^2$. Set $x=ay, a>0, y\in\br^2$. We write $g(y)=f(ay)$; to make the writing of differential operators more precise, we write $x$- or $y$-subindex to indicate the variable the differential operator is computed with. For instance $\dd_x$ and $\dd_y$ are the Laplacians computed with respect to $x$ and $y$, respectively. 

It is plain that for $j=1,2$
\begin{eqnarray*}
\prt_{y_j}g(y)&=&a\prt_{x_j}f(ay)=a\prt_{x_j}f(x),\\
\prt^2_{y^2_j}g(y)&=&a^2\prt^2_{x^2_j}f(ay)=a^2\prt^2_{x^2_j}f(x).
\end{eqnarray*}
In particular, $\prt_{z,y}g=a\prt_{z,x}f,\ \prt^2_{z,y}g=a^2\prt^2_{z,x}f$, $\dd^2_y g=a^4\dd^2_x f$, etc.

Furthermore, one has 
\begin{equation}\label{e142}
\|g\|^s_s=\int_{\br^2_y}|g(y)|^s\, dy=\int_{\br^2_y}|f(ay)|^s\, dy
=a^{-2}\int_{\br^2_x}|f(x)|^s\, dx=a^{-2}\|f\|^s_s,
\end{equation}
or $\|g\|_s=a^{-2/s}\|f\|_s$.

Suppose that $k(u)\not=0$ and  write $k(u)^4$ as $k(u)^4=|k(u)|^4e^{i\p}$. We assume also that $e^{i\p}\not =1$; the case $e^{i\p}=1$ can be obtained by a standard argument passing to the limit in relations \eqref{e51}, \eqref{e52}. So, putting $a=1/|k(u)|$,
$$
(\dd^2_x-k(u)^4)f(x)=|k(u)|^4(|k(u)|^{-4}\dd^2_x-e^{i\p})f(x)=|k(u)|^4(\dd^2_y-e^{i\p})g(y),
$$
where $g(y)=f(ay), \ x=ay$. In the same way,
$$
\prt^2_{z,x}f(x)=|k(u)|^2\prt^2_{z,y}g(y),\qquad \prt^2_{\bar z,x}f(x)=|k(u)|^2\prt^2_{\bar z,y}g(y).
$$

Set $\ti A_{jl}(y)=A_{jl}(ay)$ and $\ti B_{jl}(y)=B_{jl}(ay)$ for $j,l=1,2$. Turning back to \eqref{e143}, we rewrite it as
\begin{equation} \label{e141}
\begin{split}
&BS_{u, 11}=\frac 1{|k(u)|^2}\Big(\frac{(m+u)}{|k(u)|^2}\ti A_{11}(y)(\dd^2_y-e^{i\p})^{-1}\ti B_{11}(y)+4\ti A_{11}(y)(\dd^2_y-e^{i\p})^{-1}\prt^2_{\bar z,y}\ti B_{21}(y)\\
&\quad +4\ti A_{12}(y)(\dd^2_y-e^{i\p})^{-1}\prt^2_{z,y}\ti B_{11}(y)+\frac{(m-u)}{|k(u)|^2}\ti A_{12}(y)(\dd^2_y-e^{i\p})^{-1}\ti B_{21}(y)\Big).
\end{split}
\end{equation}
Suppose momentarily that we could prove the following estimates,
\begin{eqnarray}\label{e14}
\|\ti A_{11}(\dd^2_y-e^{i\p})^{-1}\ti B_{11}\|_{\css_p}&\le& C \|\ti A_{11}\|_{2q}\|\ti B_{11}\|_{2q},\\
\|\ti A_{11}(\dd^2_y-e^{i\p})^{-1}\prt^2_{\bar z,y}\ti B_{21}\|_{\css_p}&\le&
C \|\ti A_{11}\|_{2q}\|\ti B_{21}\|_{2q}, \nonumber\\
\|\ti A_{12}(\dd^2_y-e^{i\p})^{-1}\prt^2_{z,y}\ti B_{11}\|_{\css_p}&\le&C \|\ti A_{12}\|_{2q}\|\ti B_{11}\|_{2q}, \nonumber\\
\|\ti A_{12}(\dd^2_y-e^{i\p})^{-1}\ti B_{21}\|_{\css_p}&\le &C \|\ti A_{12}\|_{2q}\|\ti B_{21}\|_{2q}, \nonumber
\end{eqnarray}
Recall that $|(m+u)/|k(u)|^2|=|\z(u)|$ and $|(m-u)/|k(u)|^2|=|\z(u)|^{-1}$, while
$$
1\le C(|\z(u)|+|\z(u)|^{-1}),\quad u\in\bc.
$$
Plugging these bounds in \eqref{e141} implies
\begin{eqnarray}\label{e150}
\|BS_{u, 11}\|_{\css_p}&\le&\frac {C}{|k(u)|^2}(1+|\z(u)|+|\z(u)|^{-1}) \|\ti A\|_{2q}\|\ti B\|_{2q}\\
&\le&\frac {C}{|k(u)|^2}(|\z(u)|+|\z(u)|^{-1})\|\ti A\|_{2q}\|\ti B\|_{2q} \nonumber\\
&=&C(|\z(u)|+|\z(u)|^{-1})|k(u)|^{2/q-2}\|A\|_{2q}\|B\|_{2q}, \nonumber
\end{eqnarray}
where we used the rescaling \eqref{e142} in the last line. We notice that
$$
 (|\z(u)|+|\z(u)|^{-1})|k(u)|^{2/q-2}\le C\pp_q(u),\quad u\in\r(\bgm).
$$
Hence \eqref{e150} is exactly the formula claimed in \eqref{bshw1}.

Consequently, it remains to prove \eqref{e14}. Set 
$$
m_1(\xi):=\frac 1{(|\xi|^4-e^{i\p})}, \quad m_2(\xi):=\frac{(\xi_1\pm i\xi_2)^2}{(|\xi|^4-e^{i\p})}.
$$ 
Furthermore, take $\chi_1\in C_0^{\infty}(\br^2)$ with the properties: $0\le \chi_1(x)\le 1$ for all $x\in\br^2$,  $\chi_1$ is supported in $\{x\in\br^2: 1/2\le|x|\le 3/2\}$ and $\chi_1(x)=1$ for $x\in\{x\in \br^2: 3/4\le|x|\le 5/4\}$. Let $\chi_2:=1-\chi_1$; by definition $\chi_1+\chi_2=1$ is a smooth partition of unity.
Rewriting \eqref{e14} in terms of symbols of differential operators, we shall show that
$$
\|\ti A \chi_l(D) m_j(D)\ti B\|_{\css_p}\le C\|\ti A\|_{2q}\|\ti B\|_{2q},\quad l,j=1,2.
$$

For $1\le q\le 3/2$, the bound for $l=1$ is exactly \emph{Case I} of Proposition \ref{p2}. 

Consider the case $l=2$ now. Notice that for the range of $q$'s we are interested in, one can always choose $\ep>0$ small enough so that $p=p(q,\ep)\ge q$. Thus we shall prove the bound
$$
\|\ti A\chi_2(D) m_j(D)\ti B\|_{\css_q}\le C\|\ti A\|_{2q}\|\ti B\|_{2q},\quad j=1,2,
$$
which is stronger than \eqref{e14}. Notice that
\begin{eqnarray*}
|\chi_2(\xi)m_1(\chi)|&=&\big |\frac{\chi_2(\xi)}{|\xi|^4-e^{i\p}}\big|\le\frac C{(1+|\xi|^2)},\\
|\chi_2(\xi)m_2(\chi)|&=&\big |\frac{\chi_2(\xi)(\xi_1\pm i\xi_2)^2}{|\xi|^4-e^{i\p}}\big|\le\frac C{(1+|\xi|^2)}.
\end{eqnarray*}
Lemma \ref{p41} applied to the operator $\ti A\chi_2(D)m_j(D)\ti B$ gives
$$
\|\ti A\chi_2(D) m_j(D)\ti B\|_{\css_q}\le \|(1+|\xi|^2)^{-1}\|_q\, \|\ti A\|_{2q}\|B\|_{2q},\quad j=1,2,
$$
as needed.

\smallskip
Let us turn to \emph{Case II, $q>3/2$}. The proof closely follows the proof of Proposition \ref{p2}, \emph{Case II}. It consists in interpolation in between bounds for para\-me\-ters $q=3/2$ (\ie \emph{Case I}), and $q=\infty$. 

Assume that $A>0$ and $B>0$. Fix $q>3/2$ and let $p=p(q):=2q$. This choice implies in particular that $2q/3>1$. Set $a_0=0, b_0=2q/3$ and consider the strip
$$
S:=\{\z=a+it: a_0\le  a\le b_0,\ t\in\br\}.
$$
The family of operators
$$
T_\z=A^\z(\bgm-u)^{-1}B^\z, 
$$
is analytic on $S$. Apply \eqref{bshw1} with $q_0=3/2$ in place of $q$ to the family $T_\z$ on $\re\z=b_0=2q/3$; that is
\begin{equation}\label{e16}
\|A^{2q/3+it}(\bgm-u)^{-1}B^{2q/3+it}\|_{\css_3}\le C\pp(u)\|A\|^{2q/3}_{2q}\|B\|^{2q/3}_{2q},
\end{equation}
where we used that  $\|A^{2q/3+it}\|_3=\|A\|^{2q/3}_{2q}$, and the same relation holds for $B$. Notice that $q_{01}=1-1/(2q_0)=2/3$. For $\re\z=a_0=0$, we have the trivial bound
\begin{equation}\label{e161}
\|A^{it}(\bgm-u)^{-1}B^{it}\|_{\css_\infty}\le \frac 1{d(u,\s(\bgm))}.
\end{equation}
As in Proposition \ref{p2}, we interpolate between \eqref{e16} and \eqref{e161} using Theorem \ref{t43} with parameters $\z:=1$ and
$$
1=\frac{2q}{3}\, \g+(1-\g)\, 0,\quad \frac 1{p_\g}=\frac 3\g+\frac{(1-\g)}\infty=\frac 1{2q}.
$$
Hence, $\g=3/(2q)$ and $p_\g=2q$. Claim \eqref{bshw2} follows, and this finishes the proof of the theorem.
\hfill $\Box$

\section{Lieb--Thirring inequalities for bilayer graphene}\label{s2}
In what follows we always assume that $m>0$. We begin with the standard Zhukovsky transform
\begin{equation}\label{zhuk}
z=z(w)=\frac{m}2\,\Bigl(w+\frac1{w}\Bigr),
\end{equation}
which maps the upper half-plane $\bc_+$ onto the domain $\rho(\bgm)$. Since
$$ |z(w)\pm m|=\frac{m}{2|w|}\,|w\pm 1|^2, $$
we have
\begin{equation}\label{zhuk1}
\begin{split}
|z+m|+|z-m| &=\frac{m}{2|w|}\,\Bigl(|w+1|^2+|w-1|^2\Bigr)=\frac{m}{|w|}\,\bigl(1+|w|^2\bigr), \\
|z^2-m^2|^{\frac12} &=\frac{m}{2|w|}\,|w^2-1|.
\end{split}
\end{equation}
The distortion \cite[Cor. 1.4]{pom} for the Zhukovsky transform reads as
\begin{equation}\label{dist}
\frac{d(z,\s(\bgm))}{\im w}\asymp |z'(w)|=\frac{m|w^2-1|}{2|w|^2}=\frac{|z^2-m^2|^{1/2}}{|w|}\,, \qquad w\in\bc_+.
\end{equation}

\subsection{Proof of Theorem \ref{t-lthin1}, Case I: $1<q\le 3/2$}\label{s21}
We have, by \eqref{zhuk1},
$$ \Phi(z(w))=C(1+|w|^2)\,\frac{|w|^{p_1}}{|w^2-1|^{2q_1}}\,, \qquad p_1:=2q_1-1=1-\frac1q>0. $$
The bound \eqref{bshw1} in the variable $w$ reads
\begin{equation}\label{bshw11}
\|V_2(\bgm-z(w))^{-1}V_1\|_{\css_{p}}\le C_9(1+|w|^2)\,\frac{|w|^{p_1}}{|w^2-1|^{2q_1}}\,\|V\|_q, \quad w\in\bc_+,
\end{equation}
where $V_2=A:=|V|^{1/2}$ and $V_1=B:=V^{1/2}$, see the discussion preceding \eqref{e004}.
For $w=iy$, $y>0$,
\begin{equation}\label{bshw12}
\|V_2(\bgm-z(iy))^{-1}V_1\|_{\css_{p}}\le C_9\left(\frac{y}{1+y^2}\right)^{p_1}\,\|V\|_q<\frac{C_9}{y^{p_1}}\,\|V\|_q.
\end{equation}
We proceed with the {\it regularized perturbation determinant}
$$ H(w):=\det_p \bigl(I+V_2(\bgm-z(w))^{-1}V_1\bigr), \qquad w\in\bc_+, $$
which admits the bounds, see \cite[Thm. 9.2]{si1}
\begin{equation}\label{perdet1}
\log |H(w)| \le \gga_p\,\|V_2(\bgm-z(w))^{-1}V_1\|_{\css_{p}}^p
\end{equation}
and
\begin{equation}\label{perdet2}
|H(w)-1|\le\p\bigl(\|V_2(\bgm-z(w))^{-1}V_1\|_{\css_{p}}\bigr), 
\end{equation}
where
$$
\p(x):=x\exp\{\gga_p(x+1)^p\}, \quad x\ge0.
$$
Denote
\begin{equation}\label{norm} 
h(w)=h_y(w):=\frac{H(yw)}{H(iy)}\,, \qquad h(i)=1, 
\end{equation}
$y\ge1$ is chosen later on.

\begin{proposition}\label{pr1}
Assume that
\begin{equation}\label{perboun}
\|V\|_q\le 1.
\end{equation}
Then there is a constant $C_{10}=C_{10}(m,q,\ep)$ so that for $y=C_{10}$ the following holds
\begin{equation*}
\log |h(w)|\le C_{11}\,\frac{(1+|w|)^{4pq_1}}{|w^2-y^{-2}|^{2pq_1}}\,\|V\|_q, \qquad w\in\bc_+.
\end{equation*}
\end{proposition}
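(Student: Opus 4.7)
The plan is to split $\log|h(w)| = \log|H(yw)| - \log|H(iy)|$ and bound the two pieces separately: the numerator via \eqref{perdet1} combined with \eqref{bshw11} at the point $yw$, and the denominator by showing that $|H(iy)|$ is close to $1$ once $y$ is taken sufficiently large. The role of fixing $y=C_{10}$ is precisely to exploit the decay factor $y^{-p_1}$ that emerges from \eqref{bshw11} (and \emph{a fortiori} from its imaginary-axis specialization \eqref{bshw12}), which is the mechanism that keeps $|H(iy)|$ uniformly bounded away from $0$ and, simultaneously, makes $\bigl|\log|H(iy)|\bigr|$ depend \emph{linearly} on $\|V\|_q$.

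For the numerator, substitute $yw$ into \eqref{bshw11} and use $|(yw)^2-1|=y^2|w^2-y^{-2}|$ together with $1+|yw|^2\le y^2(1+|w|^2)$ for $y\ge 1$; the identity $p_1=2q_1-1$ collapses the total $y$-exponent to $-p_1$:
\[
\|V_2(\bgm-z(yw))^{-1}V_1\|_{\css_p} \le C_9\,y^{-p_1}\,\frac{(1+|w|^2)|w|^{p_1}}{|w^2-y^{-2}|^{2q_1}}\,\|V\|_q.
\]
Raising this to the $p$-th power and invoking \eqref{perdet1} yields
\[
\log|H(yw)|\le \gga_p C_9^p\,y^{-pp_1}\,\frac{(1+|w|^2)^p|w|^{pp_1}}{|w^2-y^{-2}|^{2pq_1}}\,\|V\|_q^p.
\]

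For the denominator, \eqref{bshw12} gives $\|V_2(\bgm-z(iy))^{-1}V_1\|_{\css_p}\le C_9 y^{-p_1}\|V\|_q$. The elementary inequality $\p(x)\le x\exp(\gga_p 2^p)$ on $[0,1]$, together with the choice of $y=C_{10}$ sufficiently large, reduces the right-hand side of \eqref{perdet2} to $\p(C_9 y^{-p_1}\|V\|_q)\le \tfrac12\|V\|_q$. Hence $|H(iy)-1|\le\tfrac12\|V\|_q\le 1/2$, and the bounds $-\log(1-t)\le 2t$ on $[0,1/2]$ and $\log(1+t)\le t$ give $\bigl|\log|H(iy)|\bigr|\le\|V\|_q$. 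Combining with the previous display, absorbing the factor $y^{-pp_1}$ into the constant, and using $\|V\|_q^p\le\|V\|_q$ (valid because $p\ge 1$ throughout $1<q\le 3/2$), we obtain
\[
\log|h(w)|\le C\,\frac{(1+|w|^2)^p|w|^{pp_1}}{|w^2-y^{-2}|^{2pq_1}}\,\|V\|_q + \|V\|_q.
\]
Two elementary inequalities close the argument: since $q_1\ge 1/2$, one has $2p+pp_1=p(1+2q_1)\le 4pq_1$, whence $(1+|w|^2)^p|w|^{pp_1}\le C(1+|w|)^{4pq_1}$; and $|w^2-y^{-2}|\le 1+|w|^2\le (1+|w|)^2$ for $y\ge 1$ shows that $(1+|w|)^{4pq_1}/|w^2-y^{-2}|^{2pq_1}\ge 1$, which absorbs the additive $\|V\|_q$ into the main term.

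The hardest point is the simultaneous constraint on $y=C_{10}$: it must be large enough both to make \eqref{perdet2} yield a linear-in-$\|V\|_q$ bound on $\bigl|\log|H(iy)|\bigr|$ (the $p$-th power estimate \eqref{perdet1}, if applied naively at $iy$, would wreck the linear dependence) and to exploit the decay $y^{-p_1}$ in the Schatten norm. Both conditions rely crucially on $p_1>0$, i.e.\ $q>1$; for the borderline case $q=1$, $p_1=0$ and the mechanism collapses, which is exactly why Theorem \ref{t-lthin1}(I) must exclude $q=1$, as already noted in Remark \ref{r1}(1).
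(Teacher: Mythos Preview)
Your proof is correct and follows essentially the same approach as the paper's: split $\log|h(w)|=\log|H(yw)|-\log|H(iy)|$, bound the first term via \eqref{perdet1}+\eqref{bshw11} (extracting the $y^{-p_1}$ decay from the rescaling), control the second via \eqref{perdet2} after choosing $y$ large enough that the Schatten norm at $iy$ is at most $1$, then absorb the additive $\|V\|_q$ term using $2p+pp_1\le 4pq_1$ and $|w^2-y^{-2}|\le(1+|w|)^2$. The only cosmetic difference is that the paper factors out $\|V\|_q/y^{p_1}$ before the final absorption, whereas you use $\|V\|_q^p\le\|V\|_q$ directly; both arrive at the same bound.
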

\begin{proof}
Without loss of generality we assume that $C_9>1$. If $y^{p_1}\ge C_9\ge C_9\|V\|_q$, we have, by \eqref{bshw12},
\begin{equation}\label{bshw13}
\|V_2(\bgm-z(iy))^{-1}V_1\|_{\css_{p}}\le\frac{C_9}{y^{p_1}}\,\|V\|_q\le\|V\|_q\le 1.
\end{equation}
An obvious bound $\p(x)\le\exp\{2^p\gga_p\}\,x$, $0\le x\le 1$, implies, in view of \eqref{bshw13},
$$ \p\bigl(\|V_2(\bgm-z(iy))^{-1}V_1\|_{\css_{p}}\bigr)\le e^{2^p\gga_p}\,\|V_2(\bgm-z(iy))^{-1}V_1\|_{\css_{p}}, $$
and so, by \eqref{perdet2},
$$ 
1-|H(iy)|\le |H(iy)-1|\le \frac{C_9e^{2^p\gga_p}}{y^{p_1}}\,\|V\|_q\le \frac12, 
$$
as soon as $y^{p_1}\ge 2C_9\exp\{2^p\gga_p\}=:C_{12}$. The case $|H(iy)|>1$ being trivial, we continue with the case $\frac 12\le |H(iy)|\le 1$. Hence,
\begin{equation}\label{perdet3}
|H(iy)|\ge\frac12\,, \quad \log|H(iy)|\ge -2\bigl(1-|H(iy)|\bigr)\ge -C_{12}\,\frac{\|V\|_q}{y^{p_1}}\,.
\end{equation}

A combination of \eqref{perdet1}, \eqref{bshw11}, and \eqref{perdet3} leads to the bound
\begin{equation*}
\begin{split}
\log|h(w)| &= \log|H(yw)|-\log|H(iy)| \\
&\le C(1+y|w|)^{2p}\frac{(y|w|)^{pp_1}}{|y^2w^2-1|^{2pq_1}}\,\|V\|_q^p+C_{12}\,\frac{\|V\|_q}{y^{p_1}} \\
&\le C_{13}\left[\frac{(1+|w|)^{2p}|w|^{pp_1}}{|w^2-y^{-2}|^{2pq_1}}\,\frac{\|V\|_q^p}{y^{pp_1}}+\frac{\|V\|_q}{y^{p_1}}\right] \\
&\le C_{13}\frac{\|V\|_q}{y^{p_1}}\left[\frac{(1+|w|)^{2p}|w|^{pp_1}}{|w^2-y^{-2}|^{2pq_1}}+1\right].
\end{split}
\end{equation*}
As $2p+pp_1-4pq_1=-pp_1<0$, we have for $y\ge1$
$$ (1+|w|)^{2p}|w|^{pp_1}+|w^2-y^{-2}|^{2pq_1}\le (1+|w|)^{2p+pp_1}+(1+|w|)^{4pq_1}<2(1+|w|)^{4pq_1}. $$
The result follows with $y=C_{10}=C_{12}^{1/p_1}$, $C_{11}=2C_{13}$.
\end{proof}

It is well known that the Lieb--Thirring inequalities agree with the Blaschke type conditions for the zeros of the
corresponding perturbation determinants. So, the next step is an application of \cite[Thm.~4.4]{bgk2} to the above function $h$.
The input parameters are
\begin{equation*}
\begin{split}
a &=0, \quad b=2pq_1, \quad c_j=0; \quad x_1'=y^{-1}, \ x_2'=-y^{-1}, \quad K=C\|V\|_q, \\
d_1 &=d_2=d=2pq_1=\left\{
                  \begin{array}{ll}
                    \frac{2q-1}{2-q}+(2-\frac1q)\ep, & 1<q<\frac43; \\
                    \frac{2q-1}{2-q}, & \frac43\le q\le\frac32.
                  \end{array}
                \right.
\end{split}
\end{equation*}
The output parameters in \cite[Thm.~4.4]{bgk2} are
$$ l=\{l\}_{a,\ep}=0, \quad (d-1+\ep)_+=\frac{3q-3}{2-q}+\o_q\ep, \quad l_1=\frac{4q-2}{2-q}+\tau_q\ep, $$
with
$$ \o_q=\left\{
                  \begin{array}{cc}
                    \frac{3q-1}{q}, & 1<q<\frac43; \\
                    1, & \frac43\le q\le\frac32.
                  \end{array}
                \right. \quad
   \tau_q=\left\{
                  \begin{array}{cc}
                    \frac{6q-1}{q}, & 1<q<\frac43; \\
                    1, & \frac43\le q\le\frac32.
                  \end{array}
                \right.
$$
So, the Blaschke type condition of \cite[Thm.~4.4]{bgk2}
takes the form
\begin{equation}\label{bla1}
\sum_{\xi\in Z(h)} \frac{(\im\xi)^{1+\ep}}{(1+|\xi|)^{l_1}}\,|\xi^2-y^{-2}|^{(d-1+\ep)_+}\le C_{14}\|V\|_q,
\end{equation}
and, since the ``test point'' $y$ in Proposition \ref{pr1} does not depend on $V$, the constant $C_{14}(m,q,\ep)$ does not depend on $V$ either.

In terms of the zeros of $H$ we have
$$ \xi\in Z(h) \ \Leftrightarrow \ y\,\xi=\l\in Z(H), \qquad \xi=\frac{\l}{y}\,, $$
and as $y=C_{10}$ is a constant, condition \eqref{bla1} does not alter
\begin{equation}\label{bla2}
\sum_{\l\in Z(H)} \frac{(\im\l)^{1+\ep}}{(1+|\l|)^{l_1}}\,|\l^2-1|^{(d-1+\ep)_+}\le C_{15}\|V\|_q.
\end{equation}

It remains to get back to the spectral variable $z\in\rho(\bgm)$, keeping in mind that for the discrete spectrum of $\bg$
the equivalence holds
$$ \z\in\s_d(\bg) \ \Leftrightarrow \ \l\in Z(H). $$
To make the final result transparent, we invoke the main result \cite[Theorem 1.1]{jcc2}, which claims, in particular,
that the discrete spectrum $\s_d(\bg)$ is bounded, that is, $|\z|\le C_{16}$, $\forall\z\in\s_d(\bg)$.  In the Zhukovsky variable the latter means
\begin{equation}\label{bouzhu}
 0<c\le |\l|\le C<\infty, \qquad \forall\l\in Z(H).
\end{equation}
So we can neglect the term $1+|\l|$ in \eqref{bla1}. Next, as in \eqref{zhuk1},
$$ |\z^2-m^2|=\frac{m^2}4\,\frac{|\l^2-1|^2}{|\l|^2} \ \Rightarrow \ c|\l^2-1|\le |\z^2-m^2|^{1/2}\le C|\l^2-1|. $$
Finally, the distortions \eqref{dist} and \eqref{bouzhu} imply
$$ c\,\im\l \le \frac{d(\z,\s(\bgm))}{|\z^2-m^2|^{1/2}}\le C\,\im\l. 
$$
\emph{Case I} of Theorem \ref{t-lthin1} is proved. \hfill $\Box$

\subsection{Proof of Theorem \ref{t-lthin1}, Case II: $q>3/2$}\label{s22}
We use the distortion \eqref{dist} to obtain the bound similar to \eqref{bshw11}
\begin{equation}\label{bshw21}
\|V_2(\bgm-z(w))^{-1}V_1\|_{\css_{p}}\le C_9\frac{(1+|w|)^{2q_2}}{(\im w)^{p_2}}\,\frac{|w|^{p_3}}{|w^2-1|^{p_4}}\,\|V\|_q,
\quad w\in\bc_+,
\end{equation}
where
$$ p=2q, \quad p_2:=1-q_2=1-\frac3{2q}>0, \quad p_3:=2-\frac5{2q}\,, \quad p_4:=1+\frac1{2q}\,. $$
Note that $p_3-p_2=p_1$. For $w=iy$, $y>0$, the bound is exactly the same as \eqref{bshw12}
\begin{equation}\label{bshw22}
\|V_2(\bgm-z(iy))^{-1}V_1\|_{\css_{p}}<\frac{C_9}{y^{p_1}}\,\|V\|_q.
\end{equation}
We argue as in the proof of Proposition \ref{pr1} to obtain the bound for $h$ \eqref{norm}
\begin{equation}\label{normboun}
\log|h(w)|\le C_{11}\,\frac{|w|^{pp_2}(1+|w|)^{2pp_4}}{(\im w)^{pp_2}\,|w^2-y^{-2}|^{pp_4}}\,\|V\|_q.
\end{equation}
Indeed,
\begin{equation*}
\begin{split}
\log|h(w)| &= \log|H(yw)|-\log|H(iy)| \\
&\le C\frac{(1+y|w|)^{2pq_2}(y|w|)^{pp_3}}{(\im yw)^{pp_2}|y^2w^2-1|^{pp_4}}\,\|V\|_q^p+C_{12}\,\frac{\|V\|_q}{y^{p_1}} \\
&\le C_{13}\left[\frac{(1+|w|)^{2pq_2}|w|^{pp_3}}{(\im w)^{pp_2}|w^2-y^{-2}|^{pp_4}}\,\frac{\|V\|_q^p}{y^{pp_1}}+\frac{\|V\|_q}{y^{p_1}}\right] \\
&\le C_{13}\frac{\|V\|_q}{y^{p_1}}\left[\frac{(1+|w|)^{2pq_2}|w|^{pp_3}}{(\im w)^{pp_2}|w^2-y^{-2}|^{pp_4}}+1\right].
\end{split}
\end{equation*}
Next,
\begin{equation*}
\begin{split}
&{} (1+|w|)^{2pq_2}|w|^{pp_3}+(\im w)^{pp_2}|w^2-y^{-2}|^{pp_4}\le (1+|w|)^{2pq_2}|w|^{pp_3}+|w|^{pp_2}(1+|w|^2)^{pp_4} \\
&\le |w|^{pp_2}\Bigl((1+|w|)^{2pq_2}|w|^{pp_1}+(1+|w|)^{2pp_4}\Bigr)\le 2|w|^{pp_2}(1+|w|)^{2pp_4},
\end{split}
\end{equation*}
and \eqref{normboun} follows.

The computation with \cite[Thm.~4.4]{bgk2} is a bit more complicated now. The input parameters are
\begin{equation*}
\begin{split}
a &=pp_2=2q-3>0, \quad b=pp_4=2q+1, \quad x_1'=y^{-1}, \quad x_2'=-y^{-1}, \quad x_1=0, \\
c_1 &=pp_2=a, \quad c_j=0, \ j\ge 2, \quad d_1=d_2=d=pp_4=b, \quad K=C\|V\|_q.
\end{split}
\end{equation*}
The output parameters in \cite[Thm.~4.4]{bgk2} are
$$ l=a, \quad \{l\}_{a,\ep}=-a, \quad (d-1+\ep)_+=2q+\ep, \quad l_1=2+4q+4\ep, $$
so the Blaschke type condition takes the form
$$ \sum_{\xi\in Z(h)} \frac{(\im \xi)^{a+1+\ep}}{(1+|\xi|)^{2+4q+4\ep}}\,\frac{|\xi^2-y^{-2}|^{2q+\ep}}{|\xi|^a}\le C_{14}\|V\|_q. $$
After the change of variable $\l=y\,\xi=C_{10}\xi$, we come to
\begin{equation}\label{bla3} 
\sum_{\l\in Z(H)} \frac{(\im \l)^{a+1+\ep}}{(1+|\l|)^{2+4q+4\ep}}\,\frac{|\l^2-1|^{2q+\ep}}{|\l|^a}\le C_{15}\|V\|_q. 
\end{equation}

As before, the final step relies on the distortion relations for the Zhukovsky transform. Indeed,  separate the upper-half plane $\bc_+$ in three regions $\oo_1:=\{\l\in \bc_+: c\le|\l|\le C\}, \ \oo_2:=\{\l\in\bc_+: |\l|\ge C\}$ and $\oo_3:=\{\l\in\bc_+: |\l|\le c\}$ with constants $c, C$ chosen as $0<c<1<C<+\infty$. 
It is clear that
$$
\sum_{\l\in Z(H)\cap \oo_1} (\im \l)^{a+1+\ep}\, |\l^2-1|^{2q+\ep}\le C
\sum_{\l\in Z(H)\cap \oo_1} \frac{(\im \l)^{a+1+\ep}}{(1+|\l|)^{2+4q+4\ep}}\,\frac{|\l^2-1|^{2q+\ep}}{|\l|^a}.
$$
On the other hand,  one has $|\zeta(\l)|\asymp|\l|$ for $\l\in \oo_2$, and 
$|\zeta(\l)|\asymp |\l|^{-1}$ for $\l\in \oo_3$. Using these relations along with inequalities given next to \eqref{bouzhu}, we cut the sum \eqref{bla3} in parts corresponding to domains $\oo_i,\ i=1,2,3$, and rewrite these partial sums in terms of $\zeta$-variable.

\emph{Case II} of Theorem \ref{t-lthin1} is proved as well.\hfill $\Box$

\section{Some technical tools: interpolation theorems and Kato-Selier-Simon lemma}\label{s4}
\subsection{Kato-Selier-Simon lemma}\label{s41}
Recall the notation introduced in Section \ref{s11}. We have the following proposition usually called Kato-Selier-Simon lemma. 
\begin{proposition}[{\cite[Thm. 4.1]{si1}}]\label{p41}\hfill
\begin{enumerate}
\item Let $f, g\in L^q(\br^d),\ d\ge 1$. Then, for $2\le q<\infty$, $f(x)g(D)\in \css_q$, and
$$
\|f(x)g(D)\|_{\css_q}\le (2\pi)^{-d}\|f\|_q\|g\|_q.
$$
\item Let $f\in L^q(\br^d),\ d\ge 1$, and $A,B\in L^{2q}(\br^d)$. For $2\le q<\infty$,
$$
\|A(x) f(D) B(y)\|_{\css_q}\le (2\pi)^{-d}\|f\|_q\, \|A\|_{2q}\|B\|_{2q}.
$$
\end{enumerate}
\end{proposition}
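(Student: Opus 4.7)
The plan is to prove part (1) by complex interpolation between the Hilbert--Schmidt endpoint $q=2$ and the operator--norm endpoint $q=\infty$, and then to deduce part (2) from part (1) via a factorization of $f$ together with H\"older's inequality for Schatten classes.

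For the $q=2$ endpoint I would compute directly: the operator $f(x)g(D)$ has integral kernel $f(x)\check g(x-y)$, so
\[
\|f(x)g(D)\|_{\css_2}^{2}=\iint_{\br^d\times\br^d}|f(x)|^{2}|\check g(x-y)|^{2}\,dx\,dy=\|f\|_2^{2}\,\|\check g\|_2^{2},
\]
which reduces to a constant multiple of $\|f\|_2\|g\|_2$ after invoking the Plancherel identity \eqref{Plancherel's formula}. The $q=\infty$ endpoint is immediate, since $g(D)$ is a Fourier multiplier of $L^2$-norm $\|g\|_\infty$ and multiplication by $f$ has operator norm $\|f\|_\infty$. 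To bridge the two endpoints I would normalize $\|f\|_q=\|g\|_q=1$ and introduce the analytic family
\[
T_\zeta:=f_\zeta(x)\,g_\zeta(D),\qquad f_\zeta:=|f|^{q\zeta/2}\,\mathrm{sgn}(f),\quad g_\zeta:=|g|^{q\zeta/2}\,\mathrm{sgn}(g),
\]
on the strip $\{0\le\re\zeta\le 1\}$. On $\re\zeta=0$ both symbols are unimodular, so $\|T_\zeta\|_{\css_\infty}\le 1$; on $\re\zeta=1$ they both lie in $L^2$ with unit norm, and the $q=2$ calculation gives the corresponding $\css_2$ bound. Stein's complex interpolation theorem for Schatten classes (equivalently the three-lines lemma applied to $\zeta\mapsto\mathrm{tr}(T_\zeta K)$ tested against $K\in\css_{q'}$ of unit norm) then produces the $\css_q$ estimate at the interpolation parameter $\zeta=2/q$, which is exactly $T_{2/q}=f(x)g(D)$.

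For part (2) I would split $f=f_1f_2$ with $f_1:=|f|^{1/2}\,\mathrm{sgn}(f)$ and $f_2:=|f|^{1/2}$; each factor lies in $L^{2q}$ with $\|f_i\|_{2q}=\|f\|_q^{1/2}$. Since Fourier multipliers commute, $f(D)=f_1(D)f_2(D)$, and therefore
\[
A(x)f(D)B(x)=\bigl(A(x)f_1(D)\bigr)\bigl(f_2(D)B(x)\bigr).
\]
The second bracket is treated after passing to the adjoint, $(f_2(D)B(x))^{*}=B(x)^{*}\,\overline{f_2}(D)$, so each bracket belongs to $\css_{2q}$ by part (1) applied with exponent $2q\ge 4$, with norms controlled by $\|A\|_{2q}\|f\|_q^{1/2}$ and $\|B\|_{2q}\|f\|_q^{1/2}$ respectively. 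The Schatten H\"older inequality $\|ST\|_{\css_q}\le\|S\|_{\css_{2q}}\|T\|_{\css_{2q}}$ then closes the argument.

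The main technical point I expect to be delicate is the verification of the hypotheses of Stein's theorem, namely operator-valued analyticity and admissible growth of $\zeta\mapsto T_\zeta$ in the strip. The standard route is to first establish the estimate for $f,g$ bounded with compact support, where analyticity of the trace pairing $\zeta\mapsto\mathrm{tr}(T_\zeta K)$ is straightforward, and then extend to general $f,g\in L^q$ by density and the lower semicontinuity of Schatten norms under weak operator limits, in the spirit of the approximation argument employed in the proof of Proposition \ref{p2}.
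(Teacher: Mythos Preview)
Your proposal is correct. The paper itself gives no detailed proof: it simply cites Simon \cite[Thm.~4.1]{si1} for part (1) and remarks that part (2) ``is proved similarly''. Your argument for part (1) is precisely Simon's proof---complex interpolation between the Hilbert--Schmidt identity at $q=2$ and the trivial operator-norm bound at $q=\infty$---so there is no difference there.

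For part (2) you take a slightly different route than the one suggested by ``proved similarly'': rather than setting up a fresh three-parameter analytic family $A_\zeta(x)f_\zeta(D)B_\zeta(x)$ and interpolating again, you deduce (2) directly from (1) by factoring $f=f_1f_2$ with $f_j\in L^{2q}$ and applying the Schatten H\"older inequality to the product $(A(x)f_1(D))(f_2(D)B(x))$. This is arguably cleaner, since it recycles part (1) without repeating the interpolation machinery, and it makes the constant transparent. The only point requiring care---which you flag---is justifying the operator identity $A(x)f(D)B(x)=(A(x)f_1(D))(f_2(D)B(x))$ when the intermediate multipliers $f_j(D)$ are unbounded; as you note, this is handled by first working with bounded compactly supported $A,B,f$ and then passing to the limit exactly as in the proof of Proposition~\ref{p2}.
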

The first claim of the above proposition is in Simon \cite[Thm. 4.1]{si1}; the second claim is a ``symmetrized'' version of the first one and it is proved similarly.

\subsection{Interpolation theorem for bounded analytic families}\label{s42}
In this subsection, we follow mainly the presentation of Zhu \cite[Ch. 2]{zhu}.

Let $X_0, X_1$ be two Banach spaces. We say that the pair $X_0, X_1$ is compatible, if there is a topological Hausdorff space $X$ containing both $X_0$ and $X_1$. We have the following theorem.
\begin{theorem}[{\cite[Thm. 2.4]{zhu}}]\label{t42}
Let $X_0, X_1$ be a pair of compatible Banach spaces, idem for $Y_0,Y_1$. For a $\g,\ 0<\g<1$, there are Banach spaces $X_\g, Y_\g$,
$$
X_\g=[X_0,X_1]_\g, \qquad Y_\g=[Y_0,Y_1]_\g,
$$
interpolating in between $X_0$ and $X_1$ and   $Y_0$ and $Y_1$, respectively, in the following sense.

Let $T:X_0+X_1\to Y_0+Y_1$ be a \emph{bounded} linear map such that
\begin{eqnarray*}
\|Tx\|_{Y_0}&\le& C_0\|x\|_{X_0}, \quad x\in X_0,\\
\|Tx\|_{Y_1}&\le& C_1\|x\|_{X_1}, \quad x\in X_1.
\end{eqnarray*}
Then $T$ induces a linear map $T_\g:X_\g\to Y_\g$ with the property
$$
\|T_\g\|\le C_0^\g C^{1-\g}_1.
$$
\end{theorem}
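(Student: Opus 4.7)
The plan is to invoke Calder\'on's complex interpolation construction. Set $S:=\{z\in\bc:0<\re z<1\}$ and introduce the class $\cf(X_0,X_1)$ of functions $f:\overline{S}\to X_0+X_1$ that are continuous and bounded on $\overline{S}$, holomorphic in $S$, with $f(it)\in X_0$ and $f(1+it)\in X_1$ for every $t\in\br$, equipped with the norm
\[
\|f\|_{\cf}:=\max\left\{\sup_{t\in\br}\|f(it)\|_{X_0},\ \sup_{t\in\br}\|f(1+it)\|_{X_1}\right\}.
\]
Define $X_\g:=[X_0,X_1]_\g$ as the set of $x\in X_0+X_1$ of the form $x=f(\g)$ for some $f\in\cf(X_0,X_1)$, with the quotient norm $\|x\|_{X_\g}:=\inf\{\|f\|_\cf:f(\g)=x\}$, and define $Y_\g$ analogously. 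Standard arguments (the three-lines lemma for the continuous embedding into $X_0+X_1$, plus a normal-families argument for completeness) show that $X_\g$ and $Y_\g$ are Banach spaces lying between the intersection and the sum of the endpoint spaces.

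Given that $T$ is bounded both $X_0\to Y_0$ and $X_1\to Y_1$, fix $x\in X_\g$ and, for $\ep>0$, pick $f\in\cf(X_0,X_1)$ with $f(\g)=x$ and $\|f\|_\cf\le(1+\ep)\|x\|_{X_\g}$. Apply $T$ pointwise and rescale by an entire factor: set
\[
g(z):=C_0^{z-1}C_1^{-z}\,Tf(z),\qquad z\in\overline{S}.
\]
Analyticity of $g$ in $S$ follows from that of $f$, together with the continuity of $T:X_0+X_1\to Y_0+Y_1$. On the two boundary lines, using $|C_0^{it}|=|C_1^{it}|=1$, one checks
\begin{align*}
\|g(it)\|_{Y_0}&=C_0^{-1}\|Tf(it)\|_{Y_0}\le \|f(it)\|_{X_0}\le\|f\|_\cf,\\
\|g(1+it)\|_{Y_1}&=C_1^{-1}\|Tf(1+it)\|_{Y_1}\le \|f(1+it)\|_{X_1}\le\|f\|_\cf,
\end{align*}
so $g\in\cf(Y_0,Y_1)$ with $\|g\|_\cf\le\|f\|_\cf$. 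Evaluating at $z=\g$ gives $Tx=C_0^{1-\g}C_1^{\g}\,g(\g)$, hence $Tx\in Y_\g$ and
\[
\|Tx\|_{Y_\g}\le C_0^{1-\g}C_1^{\g}\|g\|_\cf\le C_0^{1-\g}C_1^{\g}(1+\ep)\|x\|_{X_\g}.
\]
Letting $\ep\to 0$ yields the interpolation estimate (the exponents $\g$ versus $1-\g$ in the statement being a matter of how one orients the strip).

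The principal technical obstacle is not the scaling computation above, which is the essential content, but rather establishing the foundational properties of the spaces $X_\g,Y_\g$ themselves: verifying that $\cf(X_0,X_1)$ is complete, that the evaluation $\cf\ni f\mapsto f(\g)\in X_0+X_1$ is well-defined and continuous via a Phragm\'en--Lindel\"of argument applied to $z\mapsto\|f(z)\|_{X_0+X_1}$, and that the quotient $X_\g$ inherits a Banach space structure. Once these pieces are in place, the interpolation bound is just the three-lines lemma applied to the analytic family $z\mapsto g(z)$ constructed above, and we refer to Zhu \cite[Ch. 2]{zhu} for complete details.
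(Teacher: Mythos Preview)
The paper does not prove this theorem at all; it is quoted from Zhu \cite[Thm.~2.4]{zhu} as a black-box tool and no argument is supplied. Your sketch is the standard Calder\'on complex-interpolation proof and is correct in outline, so there is nothing to compare against on the paper's side.

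One remark on exponents: you obtain $\|Tx\|_{Y_\g}\le C_0^{1-\g}C_1^{\g}\|x\|_{X_\g}$, whereas the statement reads $C_0^{\g}C_1^{1-\g}$. As you note, this is a convention issue (which boundary line corresponds to which endpoint space); the paper's stated convention has $X_0$ at $\re z=0$ and weight $\g$ on $C_0$, which corresponds to writing $\g=\g\cdot 0+(1-\g)\cdot 1$ rather than your implicit $\g=(1-\g)\cdot 0+\g\cdot 1$. Either is fine once fixed consistently.
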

Saying ``interpolation'' we mean ``complex interpolation'' throughout the article. For instance, we have
\begin{equation}\label{e1001}
[L^{p_0}(\br^d),L^{p_1}(\br^d)]_\g=L^{p_\g}(\br^d),
\end{equation}
where $1\le p_0, p_1\le \infty$, $1/p_\g=\g/p_0+(1-\g)/p_1$, and $d\ge 1$, see \cite[Thm. 2.5]{zhu}. 

It is important that a similar construction holds for ``non-commutative'' $L^p$-spaces as well. That is, denoting by $\css_p$ the Schatten-von Neumann classes of compact operators, we have
$$
[\css_{p_0}, \css_{p_1}]_\g=\css_{p_\g},
$$
where $1\le p_0, p_1\le \infty$ and $1/p_\g=\g/p_0+(1-\g)/p_1$. A proof of this result is in \cite[Thm. 2.6]{zhu}. Much more information and further references on the interpolation theory of Banach spaces are in monographs Bennett-Sharpley \cite{bsh} and Bergh-Löfström \cite{bl}.

For  $1\le p_{01}, p_{02}\le +\infty$, it is plain to see that
$$
L^{p_{01}}(\br^d_x)\times L^{p_{02}}(\br^d_y) \simeq
L^{p_{01}}(\br^d_x) \dotplus L^{p_{02}}(\br^d_y), \quad x,y\in \br^d,
$$
and so interpolation \eqref{e1001} holds for these spaces as well. 
This observation is often applied to an operator $\ca$ of the form 
$$
\ca: L^{p_{01}}(\br^d)\times L^{p_{02}}(\br^d)\to \css_{q_{01}}, \quad 1\le q_{01}\le +\infty,
$$
see Section \ref{s1}. 

\subsection{Interpolation theorem for general analytic families}
Following Gohberg-Krein \cite[Ch. III.13]{gk1}, we present a generalized version of interpolation in between $\css_p$-spaces.

Let $a,b\in \br, \ a<b$ and 
$$
S=\{\z: a\le\re\z\le b\}
$$
be a vertical strip in the complex plane. For a Hilbert space $H$, we say that a family of bounded operators $(T_\z)_{\z\in S}, T_\z: H\to H$ is is \emph{analytic} on $S$, if $(T_\z f,g)$ is analytic on an open neighborhood of $S$ for any fixed $f,g\in H$. 

\begin{theorem}[{\cite[Thm. 13.1]{gk1}}]\label{t43}
Let $(T_\z)_{\z\in S}$ be an analytic family of operators. Assume that for any $f,g\in H$
\begin{equation}\label{e20}
\log |(T_\z f,g)|\le C_{1; f,g}e^{C_{2; f,g}|\im\z|},\quad \z\in S,
\end{equation}
where the constants $C_{j; f,g},\ j=1,2$ depend on $f,g$, but not on $\z\in S$, and
$$
0\le C_{2; f,g}<\frac\pi{(b-a)}.
$$

Furthermore, suppose that
\begin{enumerate}
\item for $\re\z=a$, $T_\z\in \css_{p_0}$, with $1\le p_0\le \infty$ and
$$
\|T_\z\|_{\css_{p_0}}\le C_0.
$$
\item  for $\re\z=b$, $T_\z\in \css_{p_1}$, with $1\le p_1<p_0$ and
$$
\|T_\z\|_{\css_{p_1}}\le C_1.
$$
\end{enumerate}

Take an \ $x\in (a,b)$ and write it as $x=\g\, a+(1-\g)\, b, \ \g\in (0,1)$. For $\z\in S, \re\z=x$ we have that $T_\z\in \css_{p_\g}$, and moreover
$$
\|T_\z\|_{\css_{p_\g}}\le C_0^\g\, C_1^{1-\g},
$$
where $1/p_\g=\g/p_0+(1-\g)/p_1$.
\end{theorem}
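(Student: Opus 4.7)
The plan is a Stein-type complex interpolation, reducing the Schatten norm estimate to a three-lines estimate via duality. For a fixed $\zeta_0 \in S$ with $\re\zeta_0 = x$, Schatten-Hölder duality gives
$$
\|T_{\zeta_0}\|_{\css_{p_\gamma}} = \sup\{|\mathrm{tr}(T_{\zeta_0} K)| : K \text{ finite rank}, \ \|K\|_{\css_{p_\gamma'}} \le 1\},
$$
where $1/p_\gamma' = 1 - 1/p_\gamma = \gamma/p_0' + (1-\gamma)/p_1'$. It thus suffices to bound $|\mathrm{tr}(T_{\zeta_0} K)| \le C_0^\gamma C_1^{1-\gamma}$ uniformly over such $K$.

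Given the Schmidt decomposition $K = \sum_{n=1}^N s_n \langle \cdot, e_n\rangle f_n$ with $s_n>0$ and orthonormal $\{e_n\}$, $\{f_n\}$, I would introduce the affine linear function $\alpha(\zeta)$ normalized so that $\alpha(\zeta_0) = 1$, while $\re\alpha|_{\re\zeta = a} = p_\gamma'/p_0'$ and $\re\alpha|_{\re\zeta = b} = p_\gamma'/p_1'$ (these values being forced by the interpolation identity above). Then set
$$
K_\zeta := \sum_{n=1}^N s_n^{\alpha(\zeta)}\langle \cdot, e_n\rangle f_n, \qquad F(\zeta) := \mathrm{tr}(T_\zeta K_\zeta) = \sum_{n=1}^N s_n^{\alpha(\zeta)} \langle T_\zeta f_n, e_n\rangle.
$$
Using $|s_n^{\alpha(\zeta)}| = s_n^{\re\alpha(\zeta)}$, a direct singular-value computation gives $\|K_\zeta\|_{\css_{p_0'}} \le \|K\|_{\css_{p_\gamma'}}^{p_\gamma'/p_0'} \le 1$ on $\re\zeta = a$, and analogously $\|K_\zeta\|_{\css_{p_1'}} \le 1$ on $\re\zeta = b$. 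The function $F$ is holomorphic on $S$ by the analyticity hypothesis, and Schatten-Hölder combined with the boundary assumptions of the theorem yields $|F(\zeta)| \le C_0$ on $\re\zeta = a$ and $|F(\zeta)| \le C_1$ on $\re\zeta = b$, while $F(\zeta_0) = \mathrm{tr}(T_{\zeta_0} K)$ since $\alpha(\zeta_0) = 1$.

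The main obstacle is the closing Phragmén-Lindelöf step that extracts $|F(\zeta_0)| \le C_0^\gamma C_1^{1-\gamma}$. The classical three-lines theorem demands single-exponential growth in $|\im\zeta|$, whereas hypothesis \eqref{e20} only supplies the double-exponential control $|\langle T_\zeta f_n, e_n\rangle| \le \exp\bigl(C_{1;f_n,e_n} e^{C_{2;f_n,e_n}|\im\zeta|}\bigr)$ with $C_{2;f_n,e_n} < \pi/(b-a)$, and the weights $|s_n^{\alpha(\zeta)}|$ are merely uniformly bounded on $\overline{S}$. The required tool is therefore a strengthened Phragmén-Lindelöf principle for a strip of width $d = b-a$ permitting bounds of the form $\exp(B e^{c|\im\zeta|})$ with $c < \pi/d$ — precisely the regime carved out by the hypothesis. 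Once this principle is invoked on the finite sum $F$ (which inherits the same type of growth with the same critical rate), taking the supremum over admissible finite-rank $K$ completes the argument.
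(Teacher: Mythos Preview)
The paper does not actually prove this theorem; it is quoted verbatim from Gohberg--Krein \cite[Thm.~13.1]{gk1} and used as a black box (the only argument appearing in Section~\ref{s4} is the one-line deduction of Corollary~\ref{c01}). So there is no in-paper proof to compare against.

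That said, your outline is correct and is precisely the classical route taken in \cite{gk1}: reduce the $\css_{p_\gamma}$-bound to a scalar estimate via trace duality against finite-rank operators, deform the dual operator analytically by raising its singular values to an affine power so that Schatten--H\"older matches the boundary hypotheses, and close with the Phragm\'en--Lindel\"of three-lines theorem on the strip. Your identification of the key technical point is also right: the growth hypothesis \eqref{e20} is exactly calibrated so that $F$ has order strictly less than $\pi/(b-a)$ in the strip, which is the sharp threshold for Phragm\'en--Lindel\"of on a strip of width $b-a$; the finite sum defining $F$ inherits this growth with rate $c=\max_n C_{2;f_n,e_n}<\pi/(b-a)$, and the factors $s_n^{\alpha(\zeta)}$ contribute only a bounded multiplicative constant since $\re\alpha$ is bounded on $\overline{S}$. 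One minor bookkeeping point worth making explicit: the duality $\|T\|_{\css_{p_\gamma}}=\sup_K|\mathrm{tr}(TK)|$ requires $p_\gamma<\infty$, which is guaranteed here since $p_1<\infty$ and $\gamma<1$.
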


We often use the following corollary of the above theorem.
\begin{corollary}\label{c01} Let $(T_\z)_{\z\in S}$ be an analytic family of operators satisfying the assumption of Theorem \ref{t43} with conditions (1), (2) replaced by the following assumptions:
\begin{enumerate}
\item[(1')]  for $\re\z=a$, $T_\z\in \css_{p_0}$, with $1\le p_0\le \infty$ and
$$
\|T_\z\|_{\css_{p_0}}\le C_0e^{A_0|\im \z|^2}.
$$
\item[(2')]  for $\re\z=b$, $T_\z\in \css_{p_1}$, with $1\le p_1<p_0$ and
$$
\|T_\z\|_{\css_{p_1}}\le C_1e^{A_1|\im \z|^2},
$$
for some constants $A_0, A_1\ge 0$. 
\end{enumerate}
As above, for an $x=\g\, a+(1-\g)\, b\in (a,b), \ \g\in (0,1)$ and $\z\in S, \re\z=x$ we have that $T_\z\in \css_{p_\g}$, and moreover
$$
\|T_x\|_{\css_{p_\g}}\le C''\, C_0^\g\, C_1^{1-\g},
$$
where $1/p_\g=\g/p_0+(1-\g)/p_1$. The constant $C''$ depends on $a,b, C_0, C_1, A_0$ and $A_1$. 
\end{corollary}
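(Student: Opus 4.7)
The plan is to reduce Corollary \ref{c01} to Theorem \ref{t43} by multiplying the analytic family by a scalar entire function that kills the Gaussian growth factors $e^{A_j|\im\z|^2}$ on the two boundary lines. The classical device is an exponential of a quadratic polynomial in $\z$, since on a vertical strip the modulus of $e^{\delta\z^2}$ decays like $e^{-\delta (\im\z)^2}$.

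Concretely, I would set $\delta := \max(A_0,A_1)$ and introduce
\[
\widetilde T_\z := e^{\delta\z^2}\,T_\z, \qquad \z\in S.
\]
Writing $\z=u+iv$ with $u\in[a,b]$, one has $|e^{\delta\z^2}|=e^{\delta(u^2-v^2)}\le e^{\delta b^2}\,e^{-\delta v^2}$. Therefore on $\re\z=a$,
\[
\|\widetilde T_\z\|_{\css_{p_0}}\le e^{\delta a^2}e^{-\delta v^2}\cdot C_0 e^{A_0 v^2}\le C_0\,e^{\delta b^2},
\]
uniformly in $v$, and the analogous estimate $\|\widetilde T_\z\|_{\css_{p_1}}\le C_1\,e^{\delta b^2}$ holds on $\re\z=b$. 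Thus the two boundary Schatten bounds for $\widetilde T_\z$ are genuinely \emph{uniform} in $\im\z$, as required by Theorem \ref{t43}.

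Next I would verify that the doubly-exponential analyticity bound \eqref{e20} is preserved by the modification. For fixed $f,g\in H$,
\[
\log\bigl|(\widetilde T_\z f,g)\bigr|=\delta(u^2-v^2)+\log\bigl|(T_\z f,g)\bigr|\le \delta b^2+C_{1;f,g}\,e^{C_{2;f,g}|\im\z|},
\]
which is still of the required form (after increasing $C_{1;f,g}$), and the admissible constant $C_{2;f,g}<\pi/(b-a)$ is unchanged. Hence $(\widetilde T_\z)$ meets all the hypotheses of Theorem \ref{t43}, and for $x=\g a+(1-\g)b\in(a,b)$ we get
\[
\|\widetilde T_x\|_{\css_{p_\g}}\le \bigl(C_0 e^{\delta b^2}\bigr)^{\g}\bigl(C_1 e^{\delta b^2}\bigr)^{1-\g}=e^{\delta b^2}\,C_0^{\g}\,C_1^{1-\g}.
\]

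Finally, unwinding the definition at the real point $\z=x$, where $e^{\delta\z^2}=e^{\delta x^2}$, I divide by $e^{\delta x^2}$ to conclude
\[
\|T_x\|_{\css_{p_\g}}\le e^{\delta(b^2-x^2)}\,C_0^{\g}\,C_1^{1-\g},
\]
which is exactly the stated inequality with $C''=e^{\delta(b^2-x^2)}$ depending on $a,b,A_0,A_1$ (and implicitly on $x\in[a,b]$). There is no essential obstacle in the argument; the only point to be careful about is checking that the auxiliary factor $e^{\delta\z^2}$ neither destroys the analyticity growth bound \eqref{e20} nor introduces unbounded behaviour on the two vertical lines, both of which follow from the elementary identity $|e^{\delta\z^2}|=e^{\delta(u^2-v^2)}$.
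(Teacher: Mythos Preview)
Your proof is correct and coincides with the paper's own argument: the paper also applies Theorem \ref{t43} to the modified family $\widetilde T_\z=e^{\max(A_0,A_1)\z^2}\,T_\z$. You have simply supplied the routine verifications (the identity $|e^{\delta\z^2}|=e^{\delta(u^2-v^2)}$, preservation of \eqref{e20}, and the final division by $e^{\delta x^2}$) that the paper leaves implicit.
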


The corollary follows immediately by applying Theorem \ref{t43} to the analytic family
of operators $\ti T_\z=e^{\max(A_0,A_1)\z^2}\, T_\z, \ \z\in S$.

\medskip\nt
{\bf Acknowledgments.} The third and the fourth authors were partially supported by the grant ANR-18-CE40-0035. A part of this research was done during a visit of the third author to the Institute of Mathematics of Bordeaux (IMB UMR5251) of University of Bordeaux. He is grateful to the institution for the hospitality.

\end{document}